\newcommand{\tr}{\top}
\newcommand{\trace}{\mathrm{tr}}
\newcommand{\origmech}{\textsc{MSDG}}
\newcommand{\genmech}{\textsc{CA}}
\newcommand{\sign}{\operatorname{sign}}
\newcommand{\Sign}{\operatorname{Sign}}
\newcommand{\pr}{\mathrm{Pr}}
\newcommand{\DeltaSup}[1]{\Delta^{\hspace{-.1em} #1}} %
\newcommand{\trueS}{{S^\ast}}
\newcommand{\trueF}{\mathds{I}}
\newcommand{\trueG}{\mathds{I}}
\title{Informed Truthfulness in Multi-Task Peer Prediction}
\author{
VICTOR SHNAYDER
\affil{edX; Paulson School of Engineering, Harvard University} 
ARPIT AGARWAL
\affil{Indian Institute of Science, Bangalore}
RAFAEL FRONGILLO
\affil{University of Colorado, Boulder}
DAVID C. PARKES
\affil{Paulson School of Engineering, Harvard University}
}
\begin{document}

\begin{CCSXML}
<ccs2012>
<concept>
<concept_id>10002951.10003260.10003282.10003296.10003299</concept_id>
<concept_desc>Information systems~Incentive schemes</concept_desc>
<concept_significance>500</concept_significance>
</concept>
<concept>
<concept_id>10003752.10010070.10010099</concept_id>
<concept_desc>Theory of computation~Algorithmic game theory and mechanism design</concept_desc>
<concept_significance>500</concept_significance>
</concept>
<concept>
<concept_id>10003752.10010070.10010099.10010104</concept_id>
<concept_desc>Theory of computation~Quality of equilibria</concept_desc>
<concept_significance>300</concept_significance>
</concept>
</ccs2012>
\end{CCSXML}

\begin{bottomstuff}
This research is supported in part by a grant from Google, the SEAS TomKat fund, and NSF grant CCF-1301976. Any opinions, findings, conclusions, or recommendations expressed here are those of the authors alone.
  Thanks to participants in seminars at IOMS NYU
  Stern, the Simons Instutite, the GSBE-ETBC seminar at Maastricht
  University, and reviewers
  for useful feedback.
Author addresses: 
\url{shnayder@eecs.harvard.edu},
\url{arpit.agarwal@csa.iisc.ernet.in},
\url{raf@colorado.edu},
\url{parkes@eecs.harvard.edu}.

This is the extended version of our EC'16 paper with the same title.
\end{bottomstuff}

\begin{abstract}
  The problem of peer prediction is to elicit information from agents
  in settings without any objective ground truth against which to
  score reports. Peer prediction mechanisms seek to exploit
  correlations between signals to align incentives with truthful
  reports. A long-standing concern has been the possibility of
  uninformative equilibria. For binary signals, a multi-task
  mechanism~\cite{DasguptaGhosh13} achieves {\em strong truthfulness},
  so that the truthful equilibrium strictly maximizes payoff. 
We  characterize conditions on the signal distribution
  for which this mechanism remains strongly-truthful with
  non-binary signals, also providing a greatly simplified proof. We  introduce the {\em
    Correlated Agreement} (CA) mechanism, which handles
 multiple
  signals and provides {\em informed truthfulness}: no strategy profile
  provides more payoff in equilibrium than truthful reporting, and
 the truthful equilibrium is strictly better than any uninformed strategy
  (where an agent avoids the effort of obtaining a signal). The
  $\genmech$ mechanism is maximally strongly truthful, in that no
  mechanism in a broad class of mechanisms is strongly truthful on
  a larger family of signal distributions.
    We also give a detail-free version of the  mechanism that 
removes any knowledge requirements on the part of the
designer, using reports on many tasks to learn
statistics  while retaining
$\epsilon$-informed truthfulness.
\end{abstract}
\maketitle

\section{Introduction}

We study the problem of information elicitation without verification
(``peer prediction''). This  challenging problem 
 arises across a diverse
range of multi-agent systems, in which participants are asked to
respond to an information task, and where there is no
external input available against which to score reports.  Examples
include completing surveys about the features of new products,
providing feedback on the quality of food or the ambience in a
restaurant, sharing emotions when watching video content, and peer
assessment of assignments in Massive Open Online Courses (MOOCs).

The challenge is to provide incentives for participants to
choose to invest effort in forming an opinion (a
``signal'') about a task, and to make truthful reports about
their signals. In the absence of inputs other than the reports of
participants, peer-prediction mechanisms make payments to one agent
based on the reports of others, and seek to align incentives by
leveraging correlation between reports (i.e., peers are rewarded for
making reports that are, in some sense, predictive of the reports of
others).

Some domains have binary signals, for example ``was a restaurant noisy
or not?'', and ``is an image violent or not?''. We are also interested in
domains with non-binary signals, for example:
\begin{itemize}
\item {\em Image labeling.} Signals could correspond to answers to
  questions such as ``Is the animal in the picture a dog, a cat or a
  beaver'', or ``Is the emotion expressed joyful, happy, sad or
  angry.'' These signals are categorical, potentially with some
  structure: `joyful' is closer to `happy' than `sad', for example.
\item {\em Counting objects.} There could be many possible signals, representing answers to questions such as (``are there 0, 1-5, 6-10, 11-100, or   $>$100 people in the picture''?). The signals
are ordered.
\item {\em Peer assessment in MOOCs}. Multiple students evaluate their peers' submissions to an open-response question using a  grading rubric. For example, an essay may be evaluated for clarity, reasoning, and relevance,  with the grade for reasoning ranging from 1 (``wild flights of fancy throughout''), through 3 (``each argument is well motivated and logically defended.'')
\end{itemize}

We do not mean to take an absolute position that external ``ground
truth'' inputs are never available in these applications. We do however believe it important to understand the extent to
which such systems can operate using only participant
reports. 

The design of peer-prediction mechanisms assumes the ability to make
payments to agents, and that an agent's utility is linear-increasing
with payment and does not depend on signal reports other than through
payment. Peer prediction precludes, for example, that an agent may
prefer to misreport the quality of a restaurant because she is
interested in driving more business to the restaurant.\footnote{The payments
need not be monetary; one could for example issue points to agents,
these points conveying some value (e.g., redeemable for awards, or
conveying status).  On a MOOC platform, the payments could correspond
to scores assigned as part of a student's overall grade in the class.
What is needed is a linear relationship between payment (of whatever
form) and utility, and expected-utility maximizers.}
The challenge of peer prediction is timely. For example, Google
launched {\em Google Local Guides} in November 2015. This provides
participants with points for contributing star ratings and
descriptions about locations. The current design rewards quantity but
not quality and it will be interesting to see whether this attracts
useful reports. After 200 contributions, participants receive a 1 TB upgrade of Drive storage (currently valued
at \$9.99/month.)

We are interested in {\em minimal} peer-prediction mechanisms, which require
only signal reports from participants.\footnote{While more complicated designs have
been proposed
(e.g.~\cite{Prelec2004,RBTS-Witkowski2012,radanovic-subjective-aaai15}),
in which participants are also asked to report their beliefs about the
signals that others will report, we believe that peer-prediction
mechanisms that require only signal reports are  more
likely to be adopted in practice. It is cumbersome to design user
interfaces for reporting beliefs, and people are notoriously bad at
reasoning about probabilities.}
A basic desirable property is
that truthful reporting of signals is a strict, correlated equilibrium
of the game  induced by the peer-prediction
mechanism.\footnote{It has been more common to refer to the
  equilibrium concept in peer-prediction as a Bayes-Nash equilibrium. But
as pointed out by Jens Witkowski, 
there is no agent-specific, private information
  about payoffs (utility is linear in payment).
In a correlated equilibrium, agents get signals and a strategy is a mapping from signals to actions. An action is a best response for a given signal if, conditioned on the signal, it maximizes an agent's expected utility. 
This equilibrium concept
fits peer prediction: each agent receives a signal from the environment, signals are correlated, and strategies map signals into reported signals.
}
For many years, an Achilles heel of peer prediction has been the
existence of additional equilibria that payoff-dominate truthful
behavior and reveal no useful
information~\cite{Jurca2009,DasguptaGhosh13,Radanovic-sensing2015}. An
uninformative equilibrium is one in which reports do not depend on the
signals received by agents.
Indeed, the equilibria of peer-prediction mechanisms must always
include an uninformative, mixed Nash equilibrium~\cite{waggoner14}.
Moreover, with binary signals, a single task, and two agents, \citeN{jurca-faltings2005} show that an
incentive-compatible, minimal peer-prediction mechanism will always
have an uninformative equilibrium with a higher payoff than truthful
reporting.  
Because of this, a valid concern has been that peer prediction could
have the unintended effect that agents who would otherwise be
truthful now adopt strategic misreporting behavior in order to
maximize their payments.

In this light, a result due to~\citeN{DasguptaGhosh13} is of interest: if agents are each
asked to respond to multiple, independent tasks (with some overlap
between assigned tasks), then in the case of binary signals there is a
mechanism that addresses the problem of multiple equilibria.  The
binary-signal, multi-task mechanism is {\em strongly truthful}, meaning that truthful
reporting yields a higher expected payment than any other strategy (and is
tied in payoff only with strategies that report permutations of 
signals, which in the binary case means $1\rightarrow 2, 2\rightarrow 1$).

We introduce a new, slightly weaker incentive property of {\em
  informed truthfulness}: no strategy profile provides more expected
payment than truthful reporting, and the truthful equilibrium is
strictly better than any uninformed strategy (where agent reports are
signal-independent, and avoid the effort of obtaining a signal).
Informed truthfulness is responsive to what we consider to be the two
main concerns of practical peer prediction design: \smallskip

(a) Agents should have strict incentives to exert effort toward acquiring
an informative signal, and

(b) Agents should have no incentive to misreport this information.
\smallskip

Relative to strong truthfulness, the relaxation to informed
truthfulness is that there may be other informed
strategies that match the expected payment of truthful reporting.
Even so, informed truthfulness retains the property of strong truthfulness that there can be no other behavior strictly better than truthful reporting.

The binary-signal, multi-task mechanism of Dasgupta and Ghosh is constructed from the simple building
block of a \emph{score matrix}, with a score of `1' for agreement and
`0' otherwise.
Some tasks are designated without knowledge of participants
as bonus tasks. The payment on a bonus task is 1 in the case of
agreement with another agent. There is
also a penalty of -1 if the agent's
report on another (non-bonus) task agrees with the report of another
agent on a third (non-bonus) task. In this way, 
the mechanism rewards agents when their reports on a shared (bonus) task agree more than would be expected based on their overall report frequencies.
  Dasgupta and
Ghosh %
remark that extending beyond two signals ``is one of the
most immediate and challenging directions for further work.''

Our main results are as follows:
\begin{itemize}
\item %
  We study the  {\em multi-signal extension of the Dasgupta-Ghosh mechanism} ($\origmech$), %
 and show that $\origmech$
is strongly truthful
for domains that are {\em categorical}, where
  receiving one signal reduces an agent's belief that other agents
  will receive any other signal. We also show that (i)  this
  categorical condition is tight for $\origmech$ for 
agent-symmetric signal
  distributions, and (ii) the peer grade distributions on a large
  MOOC platform do not satisfy the categorical property.
\item We generalize $\origmech$, %
 obtaining the {\em Correlated Agreement (CA) mechanism}. This 
 provides informed truthfulness in general domains, including
domains in which the $\origmech$ mechanism is neither 
informed- nor strongly-truthful.
The $\genmech$ mechanism requires  the designer to know the correlation structure of signals,
but not the full signal distribution.
  We further characterize domains where the $\genmech$ mechanism is
  strongly truthful, and show that no mechanism with similar structure and information requirements can do better.%
\item  For settings with a large number of tasks, we present a \emph{detail-free $\genmech$ mechanism}, in which the designer estimates the statistics of the 
correlation structure from agent reports. 
 This mechanism is
  informed truthful in the limit where the number of tasks is large
(handling the concern that reports affect estimation and thus scores),
and we provide
  a convergence rate analysis for $\epsilon$-informed truthfulness with
  high probability.
\end{itemize}

 We
believe that these are the first results on strong or informed
truthfulness 
in domains with
non-binary signals 
without
requiring a large population for their incentive properties (compare
with~\cite{Radanovic-sensing2015,Kamble2015,RFJ2016}).
The robust incentives of the multi-task $\origmech$ and $\genmech$ mechanisms hold for
as few as two agents and three tasks, whereas these previous papers
crucially rely on being able to learn statistics of the distribution
from multiple reports. Even if given the true underlying signal
distribution, the mechanisms in these
earlier papers would still need to use a large
population, with the payment rule based on statistics estimated from
reports, as this is critical for incentive alignment in these papers.
Our analysis framework also provides a dramatic simplification of the
techniques used by \citeN{DasguptaGhosh13}.

In a recent working paper,~\citeN{kong2016} show that a number of peer
prediction mechanisms that provide variations on strong-truthfulness
can be derived
within a single information-theoretic framework, with scores
determined based on the information they provide relative to reports
in the population (leveraging a measure of mutual information between
the joint distribution on signal reports and the product of marginal
distributions on signal reports).
Earlier mechanisms correspond to particular information measures.
Their results use different technical tools, and also include a
different, multi-signal generalization of \citeN{DasguptaGhosh13} that
is independent of our results,
outside of the family of mechanisms that we consider in Section~\ref{sec:52a}, and provides strong truthfulness in the
limit of a large number of tasks.\footnote{While they do not state or
  show that the mechanism does not need a large number of tasks in any
  special case, the techniques employed can also be used to design a
  mechanism that is a linear transform of our $\genmech$ mechanism, and
thus  informed truthful with a known signal correlation structure and a
  finite number of tasks (personal communication).}

\subsection{Related Work}

The theory of peer prediction has developed rapidly in recent years.
We focus on minimal peer-prediction mechanisms.  Beginning with
the seminal work of \citeN{MRZ2005}, a sequence of
results relax knowledge requirements on the part of the
designer~\cite{RBTS-Witkowski2012,jurca2011}, or generalize, e.g. to handle
continuous signal domains~\cite{radanovic-faltings14}.  Simple
output-agreement, where a positive payment is received if and only if
two agents make the same report (as used in the {\em ESP
game}~\cite{vonAhn2004}), has also received some theoretical
attention~\cite{waggoner14,Jain_acm_trans_econ_compu}.

Early peer prediction mechanisms had uninformative equilibria that gave better payoff than honesty. \citeN{Jurca2009} show how to remove
uninformative, pure-strategy Nash equilibria through a clever
three-peer design.
\citeN{ksl} show how to design strong truthful, minimal, single-task
mechanisms with a known model when there are reports from a large
number of agents.

In addition to \citeN{DasguptaGhosh13}
and~\citeN{kong2016}, 
several recent papers have tackled the problem of uninformative equilibria.
\citeN{Radanovic-sensing2015}
establish strong truthfulness amongst symmetric strategies in a
large-market limit where both the number of tasks and the number of
agents assigned to each task grow without bound. \citeN{RFJ2016}
provide complementary theoretical results, giving a mechanism in
which truthfulness is the equilibrium with highest payoff, based on
a population that is large enough to estimate
statistical properties of the report distribution. 
They require a self-predicting condition that limits the
correlation between differing signals.
Each agent need only be assigned
a single task.
\citeN{Kamble2015} describe a mechanism where truthfulness has higher payoff than uninformed strategies,  providing an asymptotic
analysis as the number of tasks grows without bound.
The use of learning is crucial in these papers.
In particular, they must use statistics estimated from reports to design
the payment rule in order to align incentives. This is a key distinction
from our work.\footnote{\citeN{cai-stat-estimate15} work in
a different model, showing how to 
achieve optimal statistical estimation from data
provided by self-interested participants. These authors
do not consider misreports and their mechanism is not informed- (or strongly-) truthful and is vulnerable to collusion. Their model is interesting, though,
in that it adopts a richer, non-binary effort model.
}
\citeN{Witkowski2013} first introduced the combination of learning and
peer prediction, coupling the estimation of the signal prior together
with the shadowing mechanism.

Although there is disagreement in the experimental literature about whether equilibrium selection is a problem in practice, there is compelling evidence that it matters~\cite{gao-ec14-trick-or-treat}; see~\citeN{faltings-hcomp14} for a study where uninformed equilibria did not appear to be a problem.\footnote{One difference is that this later study was in a many-signal domain, making it
harder for agents to coordinate on an uninformative strategy.}
\citeN{shnayder-ijcai16} use replicator dynamics as a model of agent learning to argue that equilibrium selection is indeed important, and that truthfulness is significantly more stable under mechanisms that ensure it has higher payoff than other strategies.
Orthogonal to concerns about equilibrium selection, \citeN{Gao2016}
point out a modeling limitation---when agents can coordinate on some
other, unintended source of signal, then this strategy may be better
than truthful reporting. They suggest randomly checking a fraction of
reports against ground truth as an alternative way to encourage
effort. We discuss this in Section~\ref{subsec:signal-models}.

Turning to online peer assessment for MOOCs, research has primarily
focused on evaluating students' skill at assessment and compensating
for grader bias~\cite{Piech2013}, as well as helping students
self-adjust for bias and provide better feedback~\cite{Kulkarni2013}.
Other studies, such as the {\em Mechanical TA}~\cite{Wright-KLB2015},
focus on reducing TA workload in high-stakes peer grading. A recent
paper~\cite{wu-las2015} outlines an approach to peer assessment that
relies on students flagging overly harsh feedback for instructor
review. We are not aware of any systematic studies of peer prediction
in the context of MOOCs, though \citeN{RFJ2016} present experimental
results from an on-campus experiment.

\section{Model}
\label{sec:model}

We consider two agents, 1 and 2, which are perhaps members of a
larger population. Let $k\in M=\{1,\ldots,m\}$ index a task from a
universe of $m\geq 3$ tasks to which one or both of these agents are
assigned, with both agents assigned to at least one task.  Each agent
receives a signal when investing effort on an assigned task. The
effort model that we adopt is binary: either an agent invests no effort and does not
receive an informed signal, or an agent invests effort
and incurs a cost and receives a signal.

Let $S_1,S_2$ denote random variables for the signals to agents 1 and
2 on some task. The signals have a finite domain, with $i,j\in
\{1,\ldots,n\}$ indexing a realized signal to agents 1 and 2,
respectively.

Each task is {\em ex ante} identical, meaning that pairs of signals
are i.i.d. for each task. Let $P(S_1{=}i,S_2{=}j)$ denote the
joint probability distribution on signals, with marginal probabilities
$P(S_1{=}i)$ and $P(S_2{=}j)$ on the signals of agents 1 and 2,
respectively. We assume exchangeability, so that the identity
of agents does not matter in defining the
signal distribution.
 The signal distribution is common knowledge to agents.\footnote{We assume common knowledge and symmetric signal models for simplicity of exposition. Our mechanisms do not require full information about the
signal distribution, only the correlation structure of signals, and can tolerate some user heterogeneity,
as described further in Section~\ref{sec:extensions}. %
}
We assume that the signal distribution satisfies \emph{stochastic relevance}, so
that for all $s'\neq s''$, there exists at least one signal $s$ such that 
\begin{align}
P(S_1{=}s|S_2{=}s')
\ne P(S_1{=}s|S_2{=}s''),
\end{align} 
and symmetrically, for agent 1's signal affecting the 
posterior on agent 2's. If two signals are
not stochastically relevant, they can be combined into
one signal.

Our constructions and analysis will make heavy use of the following matrix, which encodes the correlation structure of signals.
\begin{definition}[Delta matrix]
The {\em Delta matrix} $\Delta$ is an $n\times n$ matrix, 
with entry $(i,j)$ defined as
\begin{align}
\Delta_{ij} &= P(S_1{=}i, S_2{=}j) - P(S_1{=}i) P(S_2{=}j).
\end{align}
\end{definition}

The Delta matrix describes the correlation (positive or negative)
between different realized signal values.  For example, if
$\Delta_{1,2}=P(S_1{=}1,S_2{=}2)-P(S_1{=}1)P(S_2{=}2)=P(S_1{=}1)(P(S_2{=}2|S_1{=}1)-P(S_2{=}2))>0$, then $P(S_2{=}2|S_1{=}1)>P(S_2{=}2)$, so signal 2 is positively correlated with signal 1
(and by exchangeability, similarly for the effect of 1 on 2).
If a particular signal value increases the
probability that the other agent will receive the same signal then
$P(S_1{=}i,S_2{=}i)>P(S_1{=}i)P(S_2{=}i)$, and if this holds for all signals the Delta matrix has a positive diagonal.
Because the  entries in a row $i$ of
joint distribution $P(S_1{=}i,S_2{=}j)$ and 
a row of product distribution $P(S_1{=}i)P(S_2{=}j)$ both sum to
$P(S_1{=}i)$, each row in the $\Delta$ matrix sums to $0$ as the difference of the two.
 The same holds
for columns.

The $\genmech$ mechanism will depend on the sign structure of the $\Delta$ matrix, without knowledge of the specific values. We will use a sign operator $\Sign(x)$, with value 1 if $x>0$, 0 otherwise.\footnote{Note that this differs from the standard $\sign$ operator, which has value -1 for negative inputs.} 
\begin{example}
If the signal distribution is
\begin{align*}
P(S_1,S_2) &=
\begin{bmatrix}
     0.4       & 0.15 \\
     0.15       & .3
 \end{bmatrix}
\end{align*}

with marginal distribution $P(S) = [0.55; 0.45]$, we have
\[
\Delta = \begin{bmatrix}
    0.4       & 0.15 \\
    0.15       & .3
\end{bmatrix}
- 
\begin{bmatrix}
    0.55 \\
    0.45
\end{bmatrix}\cdot \begin{bmatrix}
0.55 & 0.45
\end{bmatrix}
\approx
\begin{bmatrix}
0.1 & -0.1 \\
-0.1  & 0.1
\end{bmatrix},
\text{ and }
\Sign(\Delta) = 
\begin{bmatrix}
1 & 0 \\
0  & 1
\end{bmatrix}.
\]
\end{example}

An agent's {\em strategy} defines, for every signal
it may receive and each task it is assigned, the signal it will
report. We allow for mixed strategies, so that an agent's strategy
defines a distribution over signals.  
Let $R_1$ and $R_2$ denote random variables for the {\em reports} by agents 1 and 2, respectively,
on some task.
Let matrices $F$ and $G$ denote the mixed strategies of agents 1 and 2, respectively,
with $F_{ir} = P(R_1{=}r|S_1{=}i)$ and $G_{jr} = P(R_2{=}r|S_2{=}j)$ to denote the
probability of making report $r$ given signal $i$ is observed
(signal $j$ for agent 2).
Let $r^k_1\in \{1,\ldots,n\}$ and $r^k_2\in \{1,\ldots,n\}$ 
refer to the realized report by agent 1 and 2, respectively,
on task $k$ (if assigned).
\begin{definition}[Permutation strategy]
A {\em permutation strategy}
is a deterministic strategy in which an agent adopts
a bijection between signals and reports, that is, $F$ (or $G$ for
agent 2) is a permutation matrix.
\end{definition}
\begin{definition}[Informed and uninformed strategies] 
An {\em informed strategy} has 
$F_{ir} \ne F_{jr}$ for some $i \ne j$, some $r\in \{1,\ldots,n\}$
(and similarly for $G$ for agent 2).
An {\em uninformed strategy} has the same report distribution for all signals.
\end{definition}
Permutation strategies are merely relabelings of the signals; in particular, truthfulness (denoted $\trueF$ below) is a permutation strategy.  Note also that by definition, deterministic uniformed strategies are those that give the same report for all signals.

Each agent is assigned to two or more tasks, and the agents overlap on
at least one task.  Let $M_b\subseteq M$ denote a non-empty set of
``bonus tasks'',  a subset of the tasks to which
both agents are assigned. Let $M_1\subseteq M\setminus
M_b$ and $M_2\subseteq M\setminus M_b$, with $M_1\cap M_2=\emptyset$ denote
non-empty sets of tasks to which agents 1 and 2 are assigned,
respectively. These will form the ``penalty tasks.''
For example, if both agents are assigned to each of
three tasks, $A, B$ and $C$, then we could choose $M_b=\{A\}$, 
$M_1=\{B\}$ and $M_2=\{C\}$. 

We assume that tasks are {\em a priori} identical, so that there is
nothing to distinguish two tasks other than their signals.  In
particular, agents have no information about which tasks are shared,
or which are designated bonus or penalty. This can be achieved by
choosing $M_b, M_1$ and $M_2$ randomly after task assignment. This can
also be motivated in largely anonymous settings, such as peer
assessment and crowdsourcing. 
A {\em multi-task peer-prediction mechanism} defines a total payment to each agent
based on the reports made across all tasks. The mechanisms
that we study assign
a total payment to an agent based on the sum of payments for each
bonus task, but where the payment for a bonus task is adjusted
downwards by the consideration of its report on a penalty
task and that of another agent on a different penalty task.

For the mechanisms we consider in this paper, it
is without loss of generality for each agent to adopt a uniform
strategy across each assigned task. Changing a strategy from task to
task is equivalent in terms of expected payment to adopting a linear
combination over these strategies, given that tasks are presented in a
random order, and given that tasks are equivalent, conditioned on
signal.

This result relies on the random order of tasks as presented to each agent, preventing coordination. Tasks will be indexed as $1, \ldots, k \ldots, m$ from the first agent's point of view. The second agent will see them reshuffled using a permutation $\pi$ chosen uniformly at random: $\pi(1), \ldots, \pi(m)$.

Let $\vec{F}$ be the first agent's strategy vector, with $F_k$ the first agent's strategy on task $k$. Fix the second agent's vector of strategies $\vec{G}$. Let $J_{ij}$ be the joint signal distribution. Then, for a broad class of mechanisms, it is without loss of generality to focus on agents having a single per-task strategy applied to all tasks. 

Let $K$, $K'$, $K''$ be random variables corresponding to a task id, with uniform probability of value $1,\ldots,m$. Let $\mathcal{M}$ be a \emph{linear} mechanism if its expected score function is a linear function of $\pr(R^K_1=r_1, R^K_2=r_2)$ and $\pr(R^{K'}_1=r_1,R^{K''}_2=r_2|K'\ne K'')$, for all set of report pairs $r_1,r_2$. For example, the DGMS mechanism we describe later has expected score 
\begin{align}
  \label{eq:2}
\pr(R^K_1&=R^K_2) - \pr(R^{K'}_1=R^{K''}_2|K'\ne K'')  = \\
&= \sum_{r=1}^n \pr(R^K_1=r,R^K_2=r) - \pr(R^{K'}_1=r,R^{K''}_2=r|K' \ne K''),
\end{align}

which fits this condition. The multi-task mechanism we define below is also linear. The expectation is with respect to the signal model, agent strategies, the random task order, and any randomization in the scoring mechanism itself.

\begin{lemma}
Let $\mathcal{M}$ be a linear mechanism.  Let $\vec{F}$ be a vector of strategies. Then for any $\vec{G}$, $\bar{F}=\text{mean}(\vec{F})$ will have the same expected score as $\vec{F}$.
\end{lemma}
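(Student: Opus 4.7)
The plan is to expand the expected score as a linear combination of same-task probabilities $\pr(R^K_1{=}r_1, R^K_2{=}r_2)$ and cross-task probabilities $\pr(R^{K'}_1{=}r_1, R^{K''}_2{=}r_2 \mid K' \ne K'')$ over all report pairs $(r_1, r_2)$, and to show that each such probability depends on the strategy vector $\vec{F}$ only through its mean $\bar{F}$. Given this, linearity of the score immediately implies that replacing $\vec{F}$ by the constant vector $(\bar{F}, \ldots, \bar{F})$ leaves the expected score unchanged.

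For the same-task term, I would fix an agent-1 task index $k$ (on which agent 1 applies $F_k$) and note that from agent 1's perspective this task is matched to agent 2's task $\pi^{-1}(k)$, where $\pi$ is a uniformly random permutation. Averaging over $\pi$ makes the effective strategy agent 2 applies on any fixed agent-1 task equal to $\bar{G} := \frac{1}{m}\sum_{l} G_l$, independent of $k$. Since signal pairs on each task are i.i.d.\ with joint distribution $J$, and $K$ is uniform on $\{1,\dots,m\}$,
\begin{align*}
\pr(R^K_1{=}r_1, R^K_2{=}r_2) = \sum_{i,j} J_{ij}\, \bar{F}_{i,r_1}\, \bar{G}_{j,r_2},
\end{align*}
which depends on $\vec{F}$ only through $\bar{F}$.

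For the cross-task term, because signals for distinct tasks are independent, conditioning on $K'\ne K''$ makes $R^{K'}_1$ and $R^{K''}_2$ independent. Averaging the two factors separately over their task indices and over $\pi$ yields
\begin{align*}
\pr(R^{K'}_1{=}r_1, R^{K''}_2{=}r_2 \mid K'\ne K'') = \Bigl(\sum_i P(S_1{=}i)\bar{F}_{i,r_1}\Bigr)\Bigl(\sum_j P(S_2{=}j)\bar{G}_{j,r_2}\Bigr),
\end{align*}
which again depends on $\vec{F}$ only through $\bar{F}$. Combining the two cases with linearity of the expected score in these probabilities completes the argument.

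The main subtlety is bookkeeping the role of the uniform random permutation $\pi$: it is exactly the uniformity of $\pi$ that collapses agent 2's per-task strategy vector $\vec{G}$ into its mean $\bar{G}$ as seen from any single agent-1 task, preventing task-level coordination between the two agents. Once this is in place, the remainder is a routine linearity calculation enabled by (i) i.i.d.\ signals across tasks and (ii) independence of signals across distinct tasks in the cross-task probability.
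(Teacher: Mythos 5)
Your proposal is correct and follows essentially the same route as the paper's proof: reduce to the two probability terms $\pr(R^K_1{=}r_1,R^K_2{=}r_2)$ and $\pr(R^{K'}_1{=}r_1,R^{K''}_2{=}r_2\mid K'\ne K'')$ by linearity, then use uniformity of the random permutation $\pi$ to average the opposing agent's per-task strategies into their mean. The only cosmetic difference is that you collapse both $\vec{F}$ and $\vec{G}$ to $\bar{F},\bar{G}$ simultaneously, while the paper holds $\vec{G}$ fixed and collapses only $\vec{F}$ -- equivalent, since the paper's final expression also depends on $\vec{G}$ only through its mean.
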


\begin{proof}
We prove equivalence of expected value of $\pr(R^K_1=r_1, R^K_2=r_2)$ and $\pr(R^{K'}_1=r_1,R^{K''}_2=r_2|K' \ne K'')$ for all $r_1,r_2$, and equivalence for any $\mathcal{M}$ follows by linearity.

Fix $r_1,r_2$. We first show that $\pr(R^K_1=r_1, R^K_2=r_2)$ has the same expected value for $\vec{F}$ and $\bar{F}$. 

\begingroup
\allowdisplaybreaks
\begin{align}
\pr(R^K_1&=r_1, R^K_2=r_2) = \\
  &=\frac{1}{m} \sum_{k=1}^m \pr(R^k_1=r_1, R^k_2=r_2) \\
  &= \frac{1}{m} \sum_{k=1}^m \sum_{i=1}^n \sum_{j=1}^n \pr(S^k_1=i,S^k_2=j)\pr(R^k_1=r_1|s_1=i)\pr(R^k_2=r_2|s_2=j)\\
  &= \frac{1}{m} \sum_{k=1}^m\sum_{i=1}^n \sum_{j=1}^n J_{ij} F^k_{ir_1} G^{\pi(k)}_{jr_2},\\
  &\text{Taking the expectation over $\pi$, we get} \notag \\
  &= \frac{1}{m!} \sum_{\pi} \frac{1}{m}\sum_{k=1}^m \sum_{i=1}^n \sum_{j=1}^n J_{ij} F^k_{ir_1} G^{\pi(k)}_{jr_2} \\
\intertext{where the sum is over all $m!$ possible permutations of the tasks. 
By symmetry, we know that each element of $G$ will be used for task $k$ with equal probability $1/m$:} 
  &= \frac{1}{m} \sum_{\ell} \frac{1}{m}\sum_{k=1}^m \sum_{i=1}^n \sum_{j=1}^n J_{ij}  F^k_{ir_1} G^\ell_{jr_2} \label{key-line} \\
  & \text{and reordering the sums, we get:} \notag \\
  &= \frac{1}{m} \sum_{\ell}\sum_{i=1}^n \sum_{j=1}^n  J_{ij}  G^\ell_{jr_2} \frac{1}{m}\sum_{k=1}^m F^k_{ir_1}. \\
  &\text{Using the definition of $\bar{F}$ as the mean of $\vec{F}$,}\\
  &= \frac{1}{m} \sum_{\ell} \sum_{i=1}^n \sum_{j=1}^n  J_{ij}  G^\ell_{jr_2} \bar{F}_{ir_1} \\
  &= \pr(R^K_1=r_1,R^K_2=r_2|\text{using $\bar{F}$ instead of $\vec{F}$})
\end{align}%
\endgroup

The same argument works for $\pr(R^{K'}_1=r_1, R^{K''}_2=r_2|K' \ne K'')$, substituting $\pr(S_1=i)\pr(S_2=j)$ for $J_{ij}$. The key to the proof is the random permutation of task order in line~\ref{key-line}, which prevents coordination between the per-task strategies of the two agents.
\end{proof}

Given this uniformity, we write $E(F,G)$ to denote the expected payment to an agent for any 
bonus task. The expectation is taken with respect to both the
signal distribution and any randomization in agent
strategies.
Let $\trueF$ denote the truthful reporting strategy, which corresponds to the identity matrix.
\begin{definition}[Strictly Proper]
  A multi-task peer-prediction mechanism is %
 {\em proper} if and only if 
  truthful strategies
form a %
 correlated 
equilibrium,
so that
$E(\trueF,\trueG) \ge E(F,\trueG),$
for all strategies $F\neq \trueF$, and similarly 
when reversing the roles
of agents 1 and 2. For {\em strict properness}, the inequality must be strict.
\end{definition}
This insists that the expected payment on a bonus task is (strictly)
higher when reporting truthfully than when using any other
strategy, given that the other agent is truthful. 
\begin{definition}[Strongly-truthful] 
A multi-task peer-prediction mechanism is {\em strongly-truthful} if
and only if for all strategies $F, G$ we have
$E(\trueF, \trueG) \ge E(F,G),$
and equality may only occur when $F$
and $G$ are both the same permutation strategy.
\end{definition}
In words, strong-truthfulness requires that both agents being truthful has strictly greater expected payment than
any other strategy profile, %
unless both agents play the same
permutation strategy, in which case equality is allowed.\footnote{Permutation strategies seem unlikely to be a practical concern, since
permutation strategies require coordination and provide no benefit
over being truthful.}
From the definition, it follows that any strongly-truthful mechanism is strictly proper.
\begin{definition}[Informed-truthful] 
A multi-task peer-prediction mechanism is {\em informed-truthful} if
and only if for all strategies $F, G$,
$E(\trueF, \trueG) \ge E(F,G),$
and equality may only occur when
both $F$ and $G$ are informed strategies.
\end{definition}
In words, informed-truthfulness requires that the truthful strategy
profile has strictly higher expected payment than any profile in which
one or both agents play an uninformed strategy, and weakly greater
expected payment than all other strategy profiles. It follows that any
informed-truthful mechanism is proper.

Although weaker than strong-truthfulness, informed truthfulness is
responsive to the primary, practical concern in
peer-prediction applications: avoiding equilibria where agents achieve the same (or greater) payment as a truthful informed agent but without putting in the effort of forming a careful opinion about the task. For example, it would be undesirable
for agents to be able to do just as well or better by reporting the
same signal all the time.
Once agents exert effort and observe a signal, it is
reasonable to expect them to make truthful reports as long as this is
an equilibrium and there is no other equilibrium with higher expected payment. Informed-truthful peer-prediction mechanisms provide  this guarantee.\footnote{%
For simplicity of presentation, we do not model the cost of effort
explicitly, but it is a straightforward extension to handle the cost
of effort as suggested in previous work~\cite{DasguptaGhosh13}. In our proposed mechanisms, an
agent that does not exert effort receives an expected payment of zero, while the expected payment for agents that exert effort and
play the truthful equilibrium is strictly positive. With knowledge of
the maximum possible cost of effort, scaling the payments appropriately incentivizes effort.}

\section{Multi-Task Peer-Prediction Mechanisms}
\label{sec:multi-task-peer}

We define a class of multi-task peer-prediction mechanisms that is
parametrized by a {\em score matrix}, $S:
\{1,\ldots,n\}\times\{1,\ldots,n\}\to\mathbb{R}$, that maps a pair of
reports into a score, the same score for both agents.  This class of
mechanisms extends the binary-signal multi-task
mechanism due to~\citeN{DasguptaGhosh13} in a natural way.

\begin{definition}[Multi-task mechanisms]
These mechanisms are parametrized by score matrix $S$.
\begin{enumerate}
\item Assign each agent to two or more tasks, with at least one task in common, and at least three tasks total.
\item  Let $r^k_1$ denote the report received from agent 1
on task $k$ (and similarly for agent 2).
Designate one or more tasks assigned to both agents as bonus
tasks (set $M_b$). Partition the remaining tasks into penalty
tasks $M_1$ and
$M_2$, where $|M_1|>0$ and $|M_2|>0$ and $M_1$ tasks have a report from
agent 1 and $M_2$ a report from agent 2.
\item For each bonus task $k \in M_b$, pick a random $\ell \in M_1$ and
  $\ell' \in M_2$.  The payment to both agent 1 and agent 2 for task $k$
  is
$S(r^k_1,r^k_2)- S(r_1^{\ell}, r_2^{\ell'}).$
\item The total payment to an agent is the sum total payment across all bonus tasks.\footnote{A variation with the same expected payoff and the same
incentive analysis is to compute the expectation of the scores on all 
pairs of penalty tasks, rather than sampling. We adopt the
simpler design for ease of exposition. This alternate
design
would reduce score variance if there are many non-bonus tasks, and may be preferable in practice.}
\end{enumerate}
\end{definition}

As discussed above, it is important that agents do not know which
tasks will become bonus tasks and which become penalty tasks.
The expected payment on a bonus task
for strategies $F,G$ is
\allowdisplaybreaks
\begin{align}
E(F,G) &= \sum_{i=1}^n\sum_{j=1}^n  P(S_1{=}i, S_2{=}j) \sum_{r_1=1}^n \sum_{r_2=1}^n P(R_1{=}r_1|S_1{=}i)P(R_2{=}r_2|S_2{=}j) S(r_1,r_2) \notag\\
 &\hspace{-2em} - \sum_{i=1}^n\sum_{j=1}^n P(S_1{=}i) P(S_2{=}j)
 \sum_{r_1=1}^n \sum_{r_2=1}^n P(R_1{=}r_1|S_1{=}i)P(R_2{=}r_2|S_2{=}j)S(r_1,r_2)\notag\\
&=\sum_{i=1}^n\sum_{j=1}^n \Delta_{ij}
\sum_{r_1=1}^n\sum_{r_2=1}^n
S(r_1,r_2)F_{ir_1}G_{jr_2}.\label{eq:expected-score-delta-gen}
\end{align}

The expected payment can also be written succinctly as
$E(F,G) = \trace(F^\tr \Delta G S^\tr).$
In words, the expected payment on a bonus task is the sum, over all
pairs of possible signals, of the product of the correlation (negative
or positive) for the signal pair and the (expected) score given the
signal pair and agent strategies.  

For intuition, note that for the identity score matrix which pays \$1 in
the case of matching reports and \$0 otherwise, agents are incentivized to give matching reports for signal pairs with positive
correlation and non-matching reports for signals with negative
correlation.
Now consider a general score matrix $S$, and suppose that all agents always report 1.
They always get $S(1,1)$ and the expected value $E(F,G)$ is a multiple of the sum of entries in the
$\Delta$ matrix, which is exactly zero. Because individual rows and columns of $\Delta$ also sum to zero, this also holds whenever a single agent uses an uninformed strategy. In comparison, truthful behavior
provides payment $E(\trueF,\trueG)=\sum_{ij}\Delta_{ij}S(i,j)$, and will be positive if
the score matrix is bigger where signals are positively correlated than where they are not.

While agent strategies in our model can be randomized, the linearity
of the expected payments allows us to restrict our attention to
deterministic strategies.
\begin{lemma}
\label{lem:opt-is-deterministic}
For any world model and any score matrix $S$, there exists a deterministic, optimal joint strategy for a multi-task mechanism.
\end{lemma}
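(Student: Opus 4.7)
The plan is to exploit the bilinearity of $E(F,G)$ in the two strategy matrices, which is manifest in equation~(\ref{eq:expected-score-delta-gen}) and in the trace form $E(F,G)=\trace(F^\tr \Delta G S^\tr)$. Since the strategy space is the product of row-simplices (each row of $F$ and of $G$ is a probability distribution over reports), it is compact, and $E$ is continuous, so an optimal joint strategy $(F^\ast, G^\ast)$ exists. The derandomization will be carried out one agent at a time.

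First, fix the second agent's strategy at $G^\ast$ and view $E(\cdot, G^\ast)$ as a function of $F$. By~(\ref{eq:expected-score-delta-gen}), this is linear in the entries $F_{ir_1}$, and moreover the contribution of row $i$ decouples:
\begin{align*}
E(F, G^\ast) \;=\; \sum_{i=1}^n \sum_{r_1=1}^n F_{ir_1}\, c_{i r_1}(G^\ast),
\qquad
c_{i r_1}(G^\ast) \;:=\; \sum_{j=1}^n \Delta_{ij} \sum_{r_2=1}^n S(r_1,r_2)\, G^\ast_{j r_2}.
\end{align*}
Because each row of $F$ lies in a simplex independently of the others, the maximum of this linear objective is attained at a vertex of the product of simplices: for each row $i$, put all mass on any $r_1^\ast(i) \in \argmax_{r_1} c_{i r_1}(G^\ast)$. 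Call the resulting deterministic strategy $F'$; then $E(F', G^\ast) \ge E(F^\ast, G^\ast)$.

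Second, repeat the same argument in the other coordinate: with $F'$ now fixed, $E(F', \cdot)$ is linear in $G$ with row-decoupled coefficients, so a deterministic $G'$ exists with $E(F', G') \ge E(F', G^\ast)$. Chaining the two inequalities gives $E(F', G') \ge E(F^\ast, G^\ast)$, and by optimality of $(F^\ast, G^\ast)$ this must hold with equality. Hence $(F', G')$ is a deterministic optimal joint strategy, which is what we needed.

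There is really no main obstacle here; the only thing to be careful about is that one cannot jointly derandomize $F$ and $G$ simultaneously by a single vertex argument (since $E$ is bilinear, not linear, in the pair), which is why the proof proceeds in two sequential steps, each holding one agent's strategy fixed. The existence of an optimal joint strategy to start from follows from compactness of the strategy space and continuity of $E$.
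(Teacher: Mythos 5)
Your proof is correct and takes essentially the same approach as the paper's: both exploit that $E(F,G)$ is linear in each agent's strategy separately and that the extreme points of the (product-of-simplices) strategy space are exactly the deterministic strategies. The only cosmetic difference is organizational---you derandomize sequentially starting from an optimum guaranteed by compactness, while the paper first derandomizes $G$ for fixed $F$ and then observes that $V(F)=\max_G h(F,G)$ is convex as a pointwise supremum of linear functions, hence maximized at an extreme point; the substance is identical.
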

\begin{proof}
The proof relies on solutions to convex optimization problems being extremal. The game value can be written $V = \max_F \max_G h(F,G)$, where 
  \vspace{-0.3cm}
  \[h(F,G) = \sum_{i=1}^n\sum_{j=1}^n \Delta_{ij}
\sum_{r_1=1}^n\sum_{r_2=1}^n
S(r_1,r_2)F_{ir_1}G_{jr_2}~.\]
  Note that $h$ is linear in both $F$ and $G$ separately.  Now letting $V(F) = \max_G h(F,G)$ be the value for the $G$ player for a fixed $F$, we have $V = \max_F V(F)$ by definition.  As $h(F,\cdot)$ is linear, and the strategy space for $G$, all binary row-stochastic matrices, is convex, there exists a maximizer at an extreme point.  These extreme points are exactly the deterministic strategies, and thus for all $F$ there exists an optimal $G=G^{\text{opt}}$ which is deterministic.
 Considering the maximization over $F$, we see that $V(F) = \max_G h(F,G)$ is a pointwise supremum over a set of linear functions, and is thus convex.  $V$ is therefore optimized by an extreme point, some deterministic $F=F^{\text{opt}}$, and for that $F^{\text{opt}}$ there exists a corresponding deterministic $G^{\text{opt}}$ by the above.
\end{proof}

Lemma~\ref{lem:opt-is-deterministic} has several consequences:
\begin{itemize}
\item It is without loss of generality to focus on deterministic strategies when establishing strongly truthful or informed truthful properties of a mechanism.
\item There is a deterministic, perhaps asymmetric equilibrium, because the optimal solution that maximizes $E(F,G)$ is also an equilibrium.
\item It is without loss of generality to consider deterministic deviations when checking whether or not truthful play is an equilibrium.
\end{itemize}

We will henceforth assume deterministic strategies. By a slight abuse of notation, let $F_i\in \{1,\ldots,n\}$ and $G_j\in \{1,\ldots,n\}$ denote the reported signals by agent 1 for signal $i$ and agent 2 for signal $j$, respectively. The expected score then simplifies to
\begin{align}
E(F,G)&=\sum_{i=1}^n\sum_{j=1}^n \Delta_{ij} S(F_i,G_j).
\label{eq:expected-score-delta-determ}
\end{align}

We can think of deterministic strategies as mapping signal pairs to
reported signal pairs. Strategy profile $(F,G)$ picks out a report
pair (and thus score) for each signal pair $i, j$ with its
corresponding $\Delta_{ij}$. That is, strategies $F$ and $G$ map signals to
reports, and the score matrix $S$ maps reports to scores, so together
they map signals to scores, and we then dot those scores with
$\Delta$. 
\section{The Dasgupta-Ghosh Mechanism}

We first study the natural extension of the~\citeN{DasguptaGhosh13}
mechanism from binary to multi-signals.  This multi-task 
mechanism uses as the
score matrix $S$ the identity matrix (`1' for agreement, `0' for
disagreement.)
\begin{definition}[The Multi-Signal Dasgupta-Ghosh mechanism ($\origmech$)]
This is a multi-task mechanism with score matrix 
$S(i,j)=1$ if $i=j$, 0 otherwise.
\end{definition}

\begin{example}
Suppose agent 1 is assigned to tasks $\{A,B\}$ and
  agent 2 to tasks $\{B,C,D\}$, so that $M_b=\{B\}, M_1=\{A\}$ and
  $M_2=\{C,D\}$. Now, if the reports on $B$ are both 1, and the
  reports on $A, C$, and $D$ were $0,0$, and $1$, respectively, the
  expected payment to each agent for bonus 
task $B$ is $1 - (1\cdot 0.5 + 0 \cdot
  0.5) = 0.5$. In contrast, if both agents use an uninformed coordinating strategy and always report 1, the expected score for both is $1 - (1\cdot 0.5 + 1 \cdot 0.5)=0$. 
\end{example}

The expected payment in the $\origmech$ mechanism on a bonus task is
\begin{align}
E(F,G) = \sum_{i,j} \Delta_{ij} \mathds{1}_{[F_i=G_j]},\label{eq:expected-score-delta}
\end{align}
where $\mathds{1}_{x=y}$ is 1 if $x=y$, 0 otherwise.  An equivalent expression is $\trace(F^\tr \Delta G)$.

\begin{definition}[Categorical model]
A world model is {\em categorical} if, when an agent sees a signal, all other signals become less likely than their prior probability; i.e.,
$P(S_2=j|S_1=i) < P(S_2=j)$, for all $i$, for all $j\neq i$ (and analogously
for agent 2). This implies positive correlation for identical signals: $P(S_2=i|S_1=i) > P(S_2=i)$.
\end{definition}

Two equivalent definitions of categorical are that the Delta matrix has positive
diagonal and negative off-diagonal elements, or that $\Sign(\Delta)=\mathds{I}$.
\begin{theorem}
\label{thm:mech-is-strong truthful}
If the world is categorical, then the $\origmech$ mechanism is strongly truthful and strictly proper. Conversely, if the Delta matrix $\Delta$ is symmetric and the world is not categorical, then the $\origmech$ mechanism is not strongly truthful.
\end{theorem}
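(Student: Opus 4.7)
I would work directly with the deterministic expected-score formula
$$E(F,G) = \sum_{i,j} \Delta_{ij}\, \mathds{1}_{[F_i = G_j]},$$
appealing to Lemma~\ref{lem:opt-is-deterministic} to restrict attention to deterministic strategies throughout.

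\emph{Forward direction (categorical implies strong truthfulness and strict properness).} Introduce the preimages $A_r = F^{-1}(r)$ and $B_r = G^{-1}(r)$, which partition $\{1,\ldots,n\}$, so that
$$E(F,G) = \sum_r \sum_{i \in A_r,\, j \in B_r} \Delta_{ij}.$$
Split each inner sum into its diagonal contribution (pairs with $i = j$) and its off-diagonal remainder. In the categorical case the off-diagonal entries of $\Delta$ are strictly negative, and since rows of $\Delta$ sum to zero the diagonal entries are strictly positive. Hence
$$E(F,G) \;\leq\; \sum_r \sum_{i \in A_r \cap B_r} \Delta_{ii} \;=\; \sum_{i:\, F_i = G_i} \Delta_{ii} \;\leq\; \sum_i \Delta_{ii} \;=\; E(\trueF, \trueG).$$
Next I would trace the equality conditions back: the first inequality is tight only when each block $A_r \times B_r$ contains no pair $(i,j)$ with $i \neq j$, and the second is tight only when $F_i = G_i$ for every $i$. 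Combined, these force $|A_r| = |B_r| \leq 1$ for every $r$, which together with $\sum_r |A_r| = n$ pins $F$ down to a permutation and $G = F$. Strict properness is then immediate by specializing $G = \trueG$ (itself a permutation): any $F \neq \trueF$ breaks the common-permutation condition, so $E(F, \trueG) < E(\trueF, \trueG)$.

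\emph{Converse direction (symmetric, non-categorical implies not strongly truthful).} From non-categoricity pick $i^* \neq j^*$ with $\Delta_{i^* j^*} \geq 0$; symmetry gives $\Delta_{j^* i^*} = \Delta_{i^* j^*} \geq 0$. Take $F = G$ to be the conflation strategy that sends $j^* \mapsto i^*$ and fixes every other signal. A direct enumeration of the pairs $(i,j)$ satisfying $F_i = G_j$ shows the matches consist of the entire diagonal together with the two extra pairs $(i^*, j^*)$ and $(j^*, i^*)$. Therefore
$$E(F,G) - E(\trueF, \trueG) = \Delta_{i^* j^*} + \Delta_{j^* i^*} = 2\,\Delta_{i^* j^*} \geq 0.$$
Since $F$ is not injective, $(F,G)$ is not a common permutation strategy, so strong truthfulness fails.

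The main obstacle I expect is the equality analysis in the forward direction: two separate equality conditions (one per-block, one global) must be fused to pin $F = G$ down to a permutation, with some care needed for blocks where $A_r$ or $B_r$ is empty. Everything else reduces to routine manipulation of the scoring formula and the partition structure of $F$ and $G$.
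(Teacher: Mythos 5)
Your proof is correct and follows essentially the same route as the paper's: bound $E(F,G)$ above by the sum of the strictly positive diagonal entries of $\Delta$ (the paper does this directly via $\mathds{1}_{[F_i=G_j]}\le 1$, you via the block decomposition $A_r\times B_r$, which is the same computation organized more systematically), trace equality conditions back to force $F=G$ a permutation, and exhibit the signal-merging strategy $j^\ast \mapsto i^\ast$ for the converse. If anything your converse is slightly tighter than the paper's, which splits into a strictly positive off-diagonal case and a strictly negative diagonal case and thereby glosses over the boundary case $\Delta_{i^\ast j^\ast}=0$ (non-categorical by the strict-inequality definition), where your observation that a mere tie at a non-permutation profile already violates strong truthfulness still applies.
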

\begin{proof}
First, we show that truthfulness maximizes expected payment. We
have $E(F,G) = \sum_{i,j} \Delta_{ij} \mathds{1}_{[F_i=G_j]}$.
The truthful strategy corresponds to the identity matrix $\mathds{I}$,
and results in a payment equal to the trace of $\Delta$:
$E(\mathds{I},\mathds{I}) = \trace(\Delta) = \sum_{i} \Delta_{ii}$.
By the categorical assumption, $\Delta$ has positive diagonal and
negative off-diagonal elements, so this is the sum of all the positive
elements of $\Delta$. Because $\mathds{1}_{[F_i=G_j]} \le 1$, this is
the maximum possible payment for any pair of strategies.

To show strong truthfulness, first consider an asymmetric joint strategy, with $F \ne G$. Then there exists $i$ s.t. $F_i \ne G_i$, reducing the expected payment by at least $\Delta_{ii} > 0$. Now consider symmetric, non-permutation strategies $F=G$.  Then there exist $i \ne j$ with $F_i = F_j$. The expected payment will then include $\Delta_{ij} < 0$. This shows that truthfulness and symmetric permutation strategies are the only optimal strategy profiles. Strict properness follows from strong truthfulness.

For the tightness of the categorical assumption, first consider 
a symmetric $\Delta$ with positive off-diagonal elements $\Delta_{ij}$ and $\Delta_{ji}$. 
Then agents can benefit by both ``merging'' signals $i$ and $j$. Let $\bar{F}$ be the strategy that is truthful on all signals other than $j$, and reports $i$ when the signal is $j$. Then 
$ E(\bar{F},\bar{F}) = \Delta_{ij} + \Delta_{ji} + \trace(\Delta) > E(\mathds{I},\mathds{I}) = \trace(\Delta)$, so $\origmech$
is not strongly truthful.
Now consider a $\Delta$ where one of the on-diagonal entries is negative, say $\Delta_{ii}<0$. Then, because all rows and columns of $\Delta$ must add to 0, there must be a $j$ such that $\Delta_{ij} > 0$, and this reduces to the previous case where ``merging'' $i$ and $j$ is useful.
\end{proof}

For binary signals (`1' and `2'), any positively correlated model, such that
$\Delta_{1,1}>0$ and $\Delta_{2,2}>0$, is
categorical, and thus we obtain a substantially simpler proof of the main result in
Dasgupta and Ghosh~\shortcite{DasguptaGhosh13}.

\subsection{Discussion: Applicability of the $\origmech$ mechanism}

Which world models are categorical? One example is a noisy observation
model, where each agent observes the ``true'' signal $t$ with
probability $q$ greater than $1/n$, and otherwise makes a mistake
uniformly at random, receiving any signal $s \ne t$ with probability
$(1-q)/(n-1)$.
Such model makes sense for classification
tasks in which the classes are fairly distinct. For example, we would expect a categorical model
for a question such as ``Does the animal in this photo swim, fly, or walk?''

On the other hand, a classification problem such as the ImageNet
challenge~\cite{imagenet-ILSVRC15}, with 1000 nuanced and often
similar image labels, is unlikely to be categorical. For example, if
``Ape'' and ``Monkey'' are possible labels, one agent seeing ``Ape''
is likely to increase the probability that another says ``Monkey'',
when compared to the prior for ``Monkey'' in a generic set of photos.
The categorical property is also unlikely to hold when
signals have a natural order, which we dub \emph{ordinal} worlds.
\begin{example}
\label{example:non-categorical}
 If two evaluators grade essays on a scale from one to five, when one decides that an essay should get a particular grade, e.g. one,  this may increase the likelihood that their peer decides on that or an adjacent grade, e.g. one or two. In this case, the sign of the delta matrix would be
 \begin{equation}
   \label{eq:ordinal-delta}
\Sign(\Delta) = 
{\small
\begin{bmatrix}
1 & 1 & 0 & 0 & 0 \\
1 & 1 & 1 & 0 & 0 \\
0 & 1 & 1 & 1 & 0 \\
0 & 0 & 1 & 1 & 1 \\
0 & 0 & 0 & 1 & 1 \\
\end{bmatrix}.
}
 \end{equation}

Under the $\origmech$ mechanism, evaluators increase their expected payoff by agreeing to always report one whenever they thought the score was either one or two, and doing a similar ``merge'' for other pairs of reports. We will return to this example below.
\end{example}

The categorical condition is a stronger requirement than previously proposed properties in the literature, such as those assumed in the analyses of the \citeN{jurca2011} and \citeN{RFJ2016}
``1/prior'' mechanism and the
\citeN{RBTS-Witkowski2012} shadowing mechanism.
The 1/prior mechanism requires the self-predicting property
\[\pr(S_2=j|S_1=i) < \pr(S_2=j|S_1=j),\] 
whereas the categorical property
insists on a upper bound of $\pr(S_2=j)$, which is tighter than
$\pr(S_2=j|S_1=j)$ in the typical case where the model has positive correlation.%
The shadowing mechanism requires 
\[\pr(S_2=i|S_1=j) - \pr(S_2=i) <
\pr(S_2=j|S_1=j) - \pr(S_2=j),\]
 which says that the likelihood of signal
$S_2=i$ cannot go up ``too much'' given signal $S_1=j$, whereas the
categorical property requires the stronger condition that $\pr(S_2=i|S_1=j) - \pr(S_2=i) < 0$. 

To see how often categorical condition holds in practice, we look at the correlation structure in a dataset from a large
MOOC provider, focusing on 104 questions with over 100 submissions
each, for a total of 325,523 assessments from 17 courses.  Each
assessment consists of a numerical score, which we examine, and an
optional comment, which we do not study here.  As an example, one
assessment task for a writing assignment asks how well the student
presented their ideas, with options ``Not much of a style at all'',
``Communicative style'', and ``Strong, flowing writing style'', and a
paragraph of detailed explanation for each. These correspond to 0, 1,
and 2 points on this rubric element.\footnote{While we only see student reports, we take as an assumption
that these reasonably approximate the true world model.
As MOOCs develop along with valuable credentials based on their peer-assessed work, 
we believe it will nevertheless
become increasingly important to provide explicit credit mechanisms for 
peer assessment.}
\begin{figure}[t]
\begin{center}
    \includegraphics[width=0.4\textwidth]{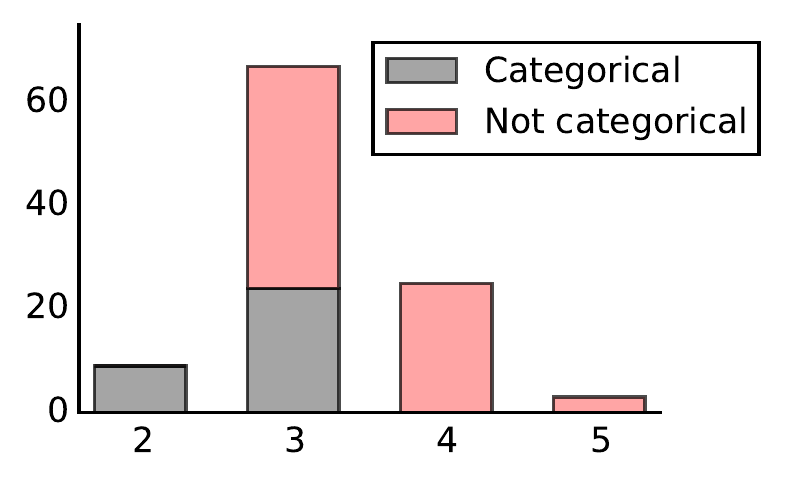}
\hfill
    \includegraphics[width=0.38\textwidth]{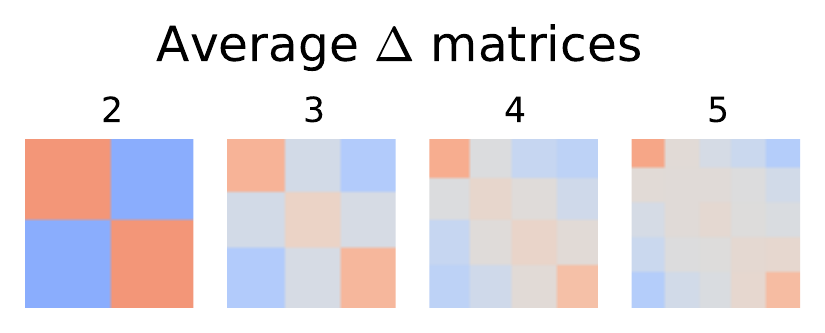}
    \includegraphics[width=0.1\textwidth]{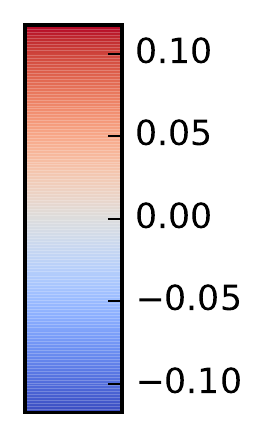}
\caption{
Left: MOOC peer assessment is an ordinal domain, with most models with three or more signals not categorical.
Right: Averaged $\Delta$ matrices, grouped by the number of signals in a domain. The positive diagonals show that users tend to agree on their assessments. For models of size 4 and 5, the ordinal nature of peer assessment is clear (e.g., an assessment of 2/5 is positively correlated with an assessment of 3/5).
}
\label{fig:model-breakdowns}
\vspace{-0.5cm}
\end{center}
\end{figure}

We estimate $\Delta$ matrices on each of the 104 questions from the
assessments. We can think about each question as corresponding to a
different signal distribution, and assessing a particular student's response to the question
as an information task that is performed by several peers.
The questions in our data set had five or fewer rubric
options (signals), with three being most common
(Figure~\ref{fig:model-breakdowns}L).

This analysis confirms that the categorical condition only holds for
about one third of our three-signal models and for none of the larger
models (Figure~\ref{fig:model-breakdowns}L). We also computed the
average $\Delta$ matrix for each model size, as visualized in
Figure~\ref{fig:model-breakdowns}R. The bands of positive correlation
around the diagonal are typical of what we refer to as an ordinal
rather than categorical domain.

\section{Handling the General Case}

In this section, we present a
mechanism that is informed-truthful for general domains.  We then discuss when it is strongly-truthful, give a version of it requiring no domain knowledge, and discuss other considerations.

\subsection{The Correlated Agreement Mechanism}

Based on the intuition given in Section~\ref{sec:multi-task-peer}, and the success of \origmech\ for categorical domains, it seems promising to base the construction of a mechanism on the correlation structure of the signals, and in particular, directly on $\Delta$ itself.  This is precisely our approach.  In fact, we will see that essentially the simplest possible mechanism following this prescription is informed-truthful for \emph{all} domains.
\begin{definition}[$\genmech$ mechanism]
The {\em Correlated Agreement ($\genmech$) mechanism} is a multi-task mechanism with score
matrix $S = \Sign(\Delta)$.
\end{definition}

\begin{theorem}
\label{thm:01-informed-truthful}
The $\genmech$ mechanism is informed-truthful and proper for all worlds.
\end{theorem}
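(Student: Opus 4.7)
The plan is to prove the theorem by combining a clean upper bound on $E(F,G)$ with the structural fact that rows and columns of $\Delta$ sum to zero, together with stochastic relevance to rule out the degenerate case.

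First, by Lemma~\ref{lem:opt-is-deterministic}, it suffices to consider deterministic strategies. Under the $\genmech$ mechanism, the score matrix is $S = \Sign(\Delta)$, so substituting into equation~\eqref{eq:expected-score-delta-determ} gives
\begin{align*}
E(F,G) = \sum_{i,j} \Delta_{ij}\, \Sign(\Delta)_{F_i, G_j}.
\end{align*}
Since $\Sign(\Delta)_{F_i,G_j} \in \{0,1\}$, each term $\Delta_{ij}\Sign(\Delta)_{F_i,G_j}$ is at most $\Delta_{ij}$ when $\Delta_{ij}>0$, and at most $0$ when $\Delta_{ij} \le 0$. Thus
\begin{align*}
E(F,G) \le \sum_{(i,j):\, \Delta_{ij} > 0} \Delta_{ij} =: D^+.
\end{align*}
For the truthful strategy $F=G=\trueF$, we have $F_i=i$ and $G_j=j$, so $\Sign(\Delta)_{F_i,G_j}=\Sign(\Delta)_{ij}$, which equals 1 exactly when $\Delta_{ij}>0$. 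Hence $E(\trueF,\trueG)=D^+$, attaining the bound. This immediately establishes that truthful play weakly dominates any other profile, giving both properness and the weak-inequality part of informed-truthfulness.

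Next I would handle the uninformed strategies. If $F$ is an uninformed deterministic strategy, then $F_i = r^*$ for all $i$, so $\Sign(\Delta)_{F_i,G_j}$ depends only on $j$, not $i$. Factoring the sum and using the fact that columns of $\Delta$ sum to zero ($\sum_i \Delta_{ij}=0$), we get
\begin{align*}
E(F,G) = \sum_{j} \Sign(\Delta)_{r^*, G_j} \sum_{i} \Delta_{ij} = 0.
\end{align*}
The same holds if $G$ is uninformed, using the row-sum property. It therefore remains only to show $E(\trueF,\trueG)=D^+ > 0$, so that truthful play strictly beats any uninformed profile.

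The last step is the main subtlety: I would use stochastic relevance to argue $\Delta \not\equiv 0$. If $\Delta$ were identically zero, then $P(S_1{=}i,S_2{=}j)=P(S_1{=}i)P(S_2{=}j)$, i.e., $S_1$ and $S_2$ would be independent, making $P(S_1{=}s\mid S_2{=}s')$ independent of $s'$ and violating stochastic relevance. Hence some entry of $\Delta$ is nonzero, and since each row sums to zero, there must exist a strictly positive entry, giving $D^+>0$. Combining all the pieces yields $E(\trueF,\trueG)=D^+ > 0 = E(F,G)$ whenever either agent plays uninformed, which is precisely informed-truthfulness; properness follows from the global max attained at truth. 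The only real obstacle is the uninformed-case bookkeeping---making sure that the zero-sum property of rows/columns kicks in cleanly without needing $F$ and $G$ to both be uninformed.
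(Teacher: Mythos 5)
Your proof is correct and takes essentially the same route as the paper's: bound $E(F,G)$ above by $\sum_{(i,j):\Delta_{ij}>0}\Delta_{ij}$ using $S=\Sign(\Delta)\in\{0,1\}^{n\times n}$, note truth attains this bound, and use the zero row/column sums of $\Delta$ to show any profile with an uninformed strategy scores exactly zero. Your explicit appeal to stochastic relevance to establish that the truthful score is strictly positive fills in a step the paper merely asserts, and is a welcome bit of added rigor.
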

\begin{proof}
The truthful strategy $F^\ast,G^\ast$ has higher payment than any other pair $F,G$:
\[E(F^\ast,G^\ast) = \sum_{i,j} \Delta_{i,j} S(i,j) = \sum_{i,j: \Delta_{ij}>0} \Delta_{i,j} \ge \sum_{i,j} \Delta_{i,j} S(F_i,G_j) = E(F,G),\]
where the inequality follows from the fact that $S(i,j) \in \{0,1\}$.

The truthful score is positive, while any uninformed strategy has score zero. Consider an uninformed strategy $F$, with $F_i=r$ for all $i$. Then, for any $G$,
\[E(F,G) = \sum_{i} \sum_j \Delta_{i,j} S(r,G_j) = \sum_j S(r,G_j) \sum_i \Delta_{i,j} = \sum_j S(r,G_j) \cdot 0 = 0, \]
where the next-to-last equality follows because rows and columns of $\Delta$ sum to zero.%
\end{proof}

While informed-truthful, the \genmech\ mechanism is not always strictly proper.  As discussed at the end of Section~\ref{sec:model}, we do not find this problematic; let us revisit this point.
The peer prediction literature makes a distinction between proper and strictly proper, and insists on
the latter. This comes from two motivations: (i) properness
is trivial in standard models: one can simply pay the same amount all
the time and this would be proper (since truthful reporting would be
as good as anything else); and (ii) strict properness provides
incentives to bother to acquire a useful signal or belief before
making a report. Neither (i) nor (ii) is a critique of 
 the $\genmech$ mechanism; consider (i) paying a fixed
amount does not give informed truthfulness, and (ii) the
mechanism
provides strict incentives to invest effort in acquiring a signal.
\begin{example}
Continuing with Example~\ref{example:non-categorical}, we can see why $\genmech$ is not manipulable.
$\genmech$ considers signals that are positively
correlated on bonus tasks (and thus have a positive entry in
$\Delta$) to be matching, so there is no need to agents to misreport
to ensure matching. 
In simple cases, e.g. if only the two signals 1 and 2 are positively correlated, they are ``merged,'' and reports of one treated equivalently to the other. In cases such as Equation~\ref{eq:ordinal-delta}, the correlation structure is more complex, and the result is not simply merging. 
\end{example}

\subsection{Strong Truthfulness of the $\genmech$ Mechanism}
\label{sec:52a}

The $\genmech$ mechanism is always informed truthful.  In this section we
characterize when it is also strongly truthful (and thus strictly proper), and show that it is 
maximal in this sense across a large class of mechanisms.
\begin{definition}[Clustered signals]
\label{def:clus-sig}
A signal distribution has {\em clustered signals} when there exist at least two identical rows or columns in $\Sign(\Delta)$.
\end{definition}
  
\begin{figure}[t]
\begin{center}
    \includegraphics[width=0.4\textwidth]{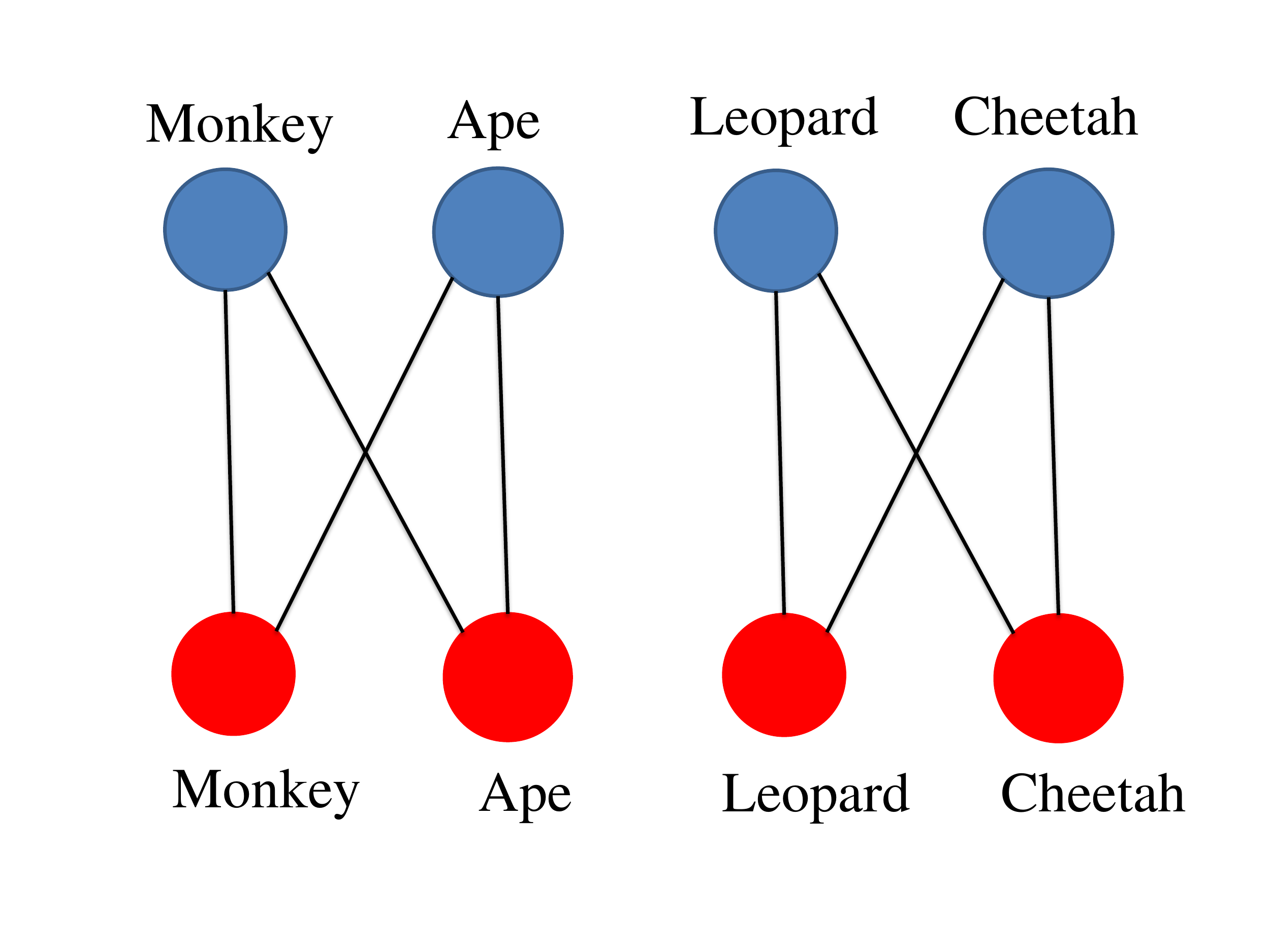}
\hfill
    \includegraphics[width=0.4\textwidth]{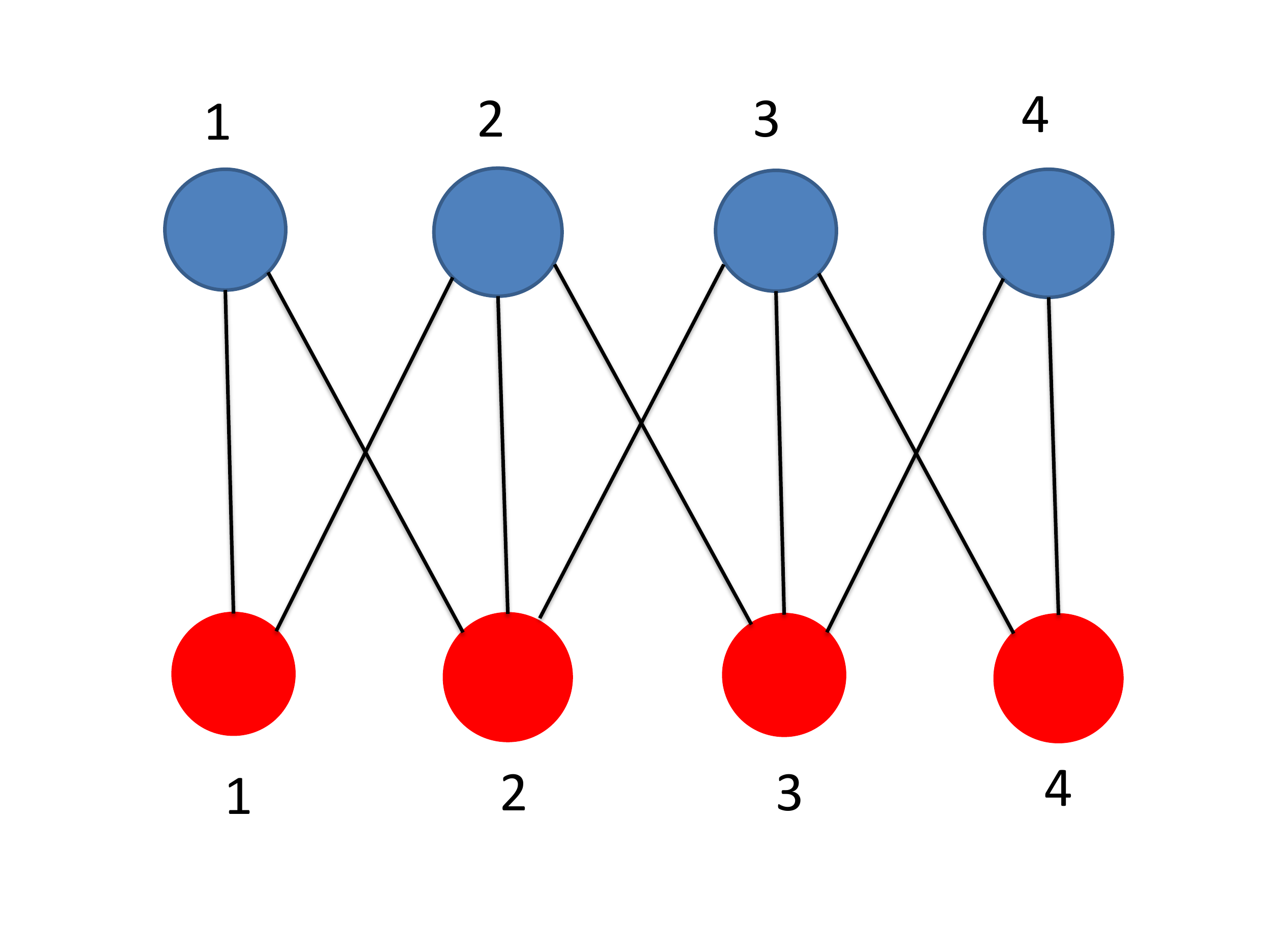}
\caption{
The blue and red nodes represent signals of agent 1 and 2, respectively. An edge between two signals represents that there is positive correlation between those signals.  
Left: A signal distribution for an image classification task with clustered signals.
Right: A signal distribution for a MOOC peer assessment task or object counting task with ordinal signals and without clustered signals.
}
\label{fig:clus-sig}
\end{center}
\end{figure}

Equivalently, two signals $i$ and $i'$ of an agent are 
clustered if $i$ is positively correlated with the same 
set of matched agent's signals as $i'$.
\begin{example}
See
Figure~\ref{fig:clus-sig}.
The first example corresponds to an image classification task where there are categories such as ``Monkey'', ``Ape'', ``Leopard'', ``Cheetah'' etc. 
The signals ``Monkey'' and ``Ape'' are clustered: for each agent, seeing one is positively correlated with the other agent having one of the two, and negatively correlated with the other possible signals. 
The second example concerns models with ordinal signals, such as 
peer assessment or counting objects.
In this  example there are no clustered signals for either agent. For example,  signal 1 is positively correlated with signals 1 and 2, while signal 2 with signals 1, 2, and 3.
\end{example}
\begin{lemma}
\label{lem:clus-sig}
  If $\Delta_{ij} \neq 0$, $\forall i,j$, then a joint strategy where at least one agent uses a non-permutation strategy and matches the expected score of truthful reporting exists if and only if there are clustered signals.
\end{lemma}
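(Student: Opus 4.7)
The plan is to reduce the score-equality condition to a purely combinatorial statement about $\Sign(\Delta)$, and then prove both directions from that.

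First I would write the truthful score as $E(\trueF,\trueG) = \sum_{(i,j):\Delta_{ij}>0} \Delta_{ij}$ and, for any deterministic $(F,G)$, $E(F,G) = \sum_{(i,j)} \Delta_{ij}\,\Sign(\Delta_{F_i,G_j})$. Taking the difference,
\begin{equation*}
E(F,G) - E(\trueF,\trueG) = \underbrace{\sum_{\substack{(i,j):\,\Delta_{ij}<0,\\ \Delta_{F_i,G_j}>0}} \Delta_{ij}}_{\le 0} \;-\; \underbrace{\sum_{\substack{(i,j):\,\Delta_{ij}>0,\\ \Delta_{F_i,G_j}\le 0}} \Delta_{ij}}_{\ge 0}.
\end{equation*}
Since $\Delta_{ij}\neq 0$ for all $(i,j)$, each of these sums is a sum of terms of fixed sign. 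Hence $E(F,G)=E(\trueF,\trueG)$ if and only if both sums vanish, which (using $\Delta_{ij}\neq 0$ again to replace $\le 0$ with $<0$) is equivalent to the pointwise preservation condition
\begin{equation*}
\Sign(\Delta_{F_i,G_j}) = \Sign(\Delta_{ij}) \quad \text{for all } i,j. \tag{$\ast$}
\end{equation*}

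For the ``if'' direction, suppose two rows $i\neq i'$ of $\Sign(\Delta)$ are identical (the clustered-columns case is symmetric). Define $F$ by $F_{i'}=i$ and $F_k = k$ for $k\neq i'$, and let $G=\trueG$. Then $F$ is non-permutation, and $(\ast)$ holds: for $k\neq i'$ trivially, and for $k=i'$ because $\Sign(\Delta_{i,j}) = \Sign(\Delta_{i',j})$ by the clustering hypothesis. Hence $E(F,\trueG)=E(\trueF,\trueG)$.

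For the ``only if'' direction, suppose $(F,G)$ satisfies $(\ast)$ and $F$ is non-permutation (if $G$ is non-permutation, argue symmetrically on columns). Then there exist $i\neq i'$ with $F_i=F_{i'}$, and applying $(\ast)$ at $(i,j)$ and $(i',j)$ gives
\begin{equation*}
\Sign(\Delta_{i,j}) = \Sign(\Delta_{F_i,G_j}) = \Sign(\Delta_{F_{i'},G_j}) = \Sign(\Delta_{i',j})
\end{equation*}
for every $j$, so rows $i$ and $i'$ of $\Sign(\Delta)$ are identical and the signals are clustered.

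The main obstacle I anticipate is the first step: carefully using the hypothesis $\Delta_{ij}\neq 0$ to turn the score inequality into the crisp pointwise sign-preservation condition $(\ast)$, since the score function $\Sign(\cdot)\in\{0,1\}$ is not injective on its own (it conflates $0$ with negative entries). After $(\ast)$ is established, the two directions are essentially bookkeeping on rows/columns of $\Sign(\Delta)$.
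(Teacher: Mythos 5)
Your proposal is correct and takes essentially the same route as the paper's proof: the identical ``merge two clustered signals'' construction (with $G=\trueG$) for the if direction, and for the only-if direction the same core observation that a collision $F_i=F_{i'}$ together with a column $j$ where $\Sign(\Delta_{i,j})\neq\Sign(\Delta_{i',j})$ forces a strict loss relative to the truthful score $\sum_{(i,j):\Delta_{ij}>0}\Delta_{ij}$. Your pointwise sign-preservation criterion $(\ast)$ is just a cleaner uniform packaging of the two-case comparison ($S(r,j')=0$ or $1$) that the paper carries out directly, so the two arguments coincide in substance.
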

\begin{proof}
Suppose clustered signals, so there exists $i\neq i'$ such that 
$\Sign(\Delta_{i,\cdot}) = \Sign(\Delta_{i',\cdot})$. Then if agent 2 is truthful, agent 1's expected score is the same for being truthful or for reporting $i'$ 
whenever she receives either $i$ or $i'$.
Formally, consider the strategies $G = \trueG$ and $F$ formed by replacing the $i$-th row in $\trueF$ by the $i'$-th row.  Observe that $S(i, j) = S(F_i, G_j)$ as the $i$-th and $i'$-th row in $S$ are identical. Hence, $E(F,G) = E(\trueF, \trueG)$. 
The same argument holds for clustered signals for agent 2.

If the world does not have clustered signals, any agent using a non-permutation strategy leads to lower expected score than being truthful.
Suppose $F$ is a non-permutation strategy, such that $E(F,G) = E(\trueF, \trueG)$ for some $G$. Then there exist signals $i\neq i'$ such $F_i=F_{i'}=r$, for some $r$. No clustered signals implies that $\exists j$ such that $\Sign(\Delta_{i,j}) \neq \Sign(\Delta_{i',j})$. Let $G(j) = j'$, for some $j'$. Without loss of generality assume that $\Delta(i,j) > 0$, then we get $\Delta(i',j) < 0$ as $\Delta(i',j) \neq 0$.
The score for signal pair $(S_1 = i,S_2 = j)$ is $S(r,j')$ and for $(S_1 = i',S_2 = j)$ is also $S(r,j')$. Either $S(r,j') = 1$ or $S(r,j') = 0$. In both cases the strategy profile $F,G$ will lead to a strictly smaller expected score as compared to the score of truthful strategy, since $\Delta(i,j) > 0$ and $\Delta(i',j) < 0$.  
Similarly, we can show that if the second agent uses a non-permutation strategy, that also leads to strictly lower expected scores for both agents.
\end{proof}

We now give a condition under which there are asymmetric permutation strategy profiles that give the same expected score as truthful reporting.
\begin{definition}[Paired permutations]
  A signal distribution has 
{\em paired permutations}
   if there exist distinct permutation matrices $P, Q$ 
   \,s.t. $P \cdot \Sign(\Delta) = \Sign(\Delta) \cdot Q$.  
\end{definition}
\begin{lemma}
\label{lem:paired-perm}
If $\Delta_{ij} \neq 0$, $\forall i,j$, then there exist asymmetric permutation strategy profiles with the same expected score under the $\genmech$ mechanism as truthful reporting if and only if the signal distribution has paired permutations.
\end{lemma}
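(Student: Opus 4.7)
The plan is to reduce the condition ``some asymmetric permutation profile attains the truthful expected score'' to the pointwise sign-pattern identity $\Sign(\Delta)_{\sigma(i),\tau(j)} = \Sign(\Delta)_{ij}$, and then recognize this identity as the paired-permutations matrix equation. I would start from formula~\eqref{eq:expected-score-delta-determ}: for deterministic permutation strategies $F_i=\sigma(i)$, $G_j=\tau(j)$ one has $E(F,G) = \sum_{i,j}\Delta_{ij}\,\Sign(\Delta)_{\sigma(i),\tau(j)}$, while $E(\trueF,\trueG) = \sum_{(i,j):\Delta_{ij}>0}\Delta_{ij}$. By Theorem~\ref{thm:01-informed-truthful}, $E(F,G) \le E(\trueF,\trueG)$.

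The core step is analyzing when that inequality is tight. Write $T^+ = \{(i,j):\Delta_{ij}>0\}$, $T^- = \{(i,j):\Delta_{ij}<0\}$, and $U = \{(i,j):\Sign(\Delta)_{\sigma(i),\tau(j)} = 1\}$. The gap rearranges to
\[
E(\trueF,\trueG) - E(F,G) \;=\; \sum_{(i,j)\in T^+\setminus U}\Delta_{ij} \;-\; \sum_{(i,j)\in U\cap T^-}\Delta_{ij},
\]
a sum of two nonnegative quantities. Equality therefore forces each sum to vanish; since $\Delta_{ij}\neq 0$ for all $i,j$ by hypothesis, no cancellation through zero entries is possible, so $T^+\subseteq U$ and $U\cap T^- = \emptyset$. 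Because every $(i,j)$ lies in $T^+\cup T^-$, this gives $U = T^+$, i.e., $\Sign(\Delta)_{\sigma(i),\tau(j)} = \Sign(\Delta)_{ij}$ for every $(i,j)$.

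Finally, I would translate this scalar identity into matrix form. Using the convention $(P_\sigma)_{ab} = \mathds{1}[a=\sigma(b)]$, a short index computation gives $(P_\sigma^\top M P_\tau)_{ij} = M_{\sigma(i),\tau(j)}$ for any matrix $M$. Applied to $M = \Sign(\Delta)$, the pointwise identity becomes $P_\sigma^\top\Sign(\Delta) P_\tau = \Sign(\Delta)$, equivalently $P_\sigma \Sign(\Delta) = \Sign(\Delta) P_\tau$. Setting $P = P_\sigma$, $Q = P_\tau$, an asymmetric permutation profile ($\sigma\neq\tau$, i.e., $P\neq Q$) matching the truthful score corresponds to distinct permutation matrices with $P\Sign(\Delta) = \Sign(\Delta) Q$---exactly the paired-permutations condition. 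The reverse direction is immediate by running each implication backward and reading off $\sigma,\tau$ from $P,Q$. I expect the equality analysis in the second paragraph to be the main obstacle, and the non-vanishing hypothesis is essential there: without it, a permutation could shuffle weight onto zero entries of $\Delta$ without altering the score, breaking the clean correspondence between score equality and pointwise sign-pattern equality.
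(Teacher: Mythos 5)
Your proposal is correct and takes essentially the same route as the paper's proof: both reduce score equality for a permutation profile to the matrix identity $F\,\Sign(\Delta)\,G^\top = \Sign(\Delta)$ (your pointwise condition $\Sign(\Delta)_{\sigma(i),\tau(j)} = \Sign(\Delta)_{ij}$ is exactly the entrywise form of this, since $(F\,S\,G^\top)_{ij} = S_{\sigma(i),\tau(j)}$), and both invoke $\Delta_{ij}\neq 0$ to ensure any differing entry forces a strict score loss. Your decomposition of the gap into the two nonnegative sums over $T^+\setminus U$ and $U\cap T^-$ is just a more explicit bookkeeping of the paper's terser claim that $\hat{S} = F\cdot S\cdot G^\top$ differing from $S$ in at least one entry yields $E[F,G] < E[\trueF,\trueG]$.
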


\begin{proof}
First we show that if the world has paired permutations then there exist asymmetric permutation strategy profiles that have the same expected score as truthful strategies.
Consider $F = P$ and $G = Q$. From the paired permutations condition it follows that $S(i,j) = S(F_i, G_j)$, $\forall i,j$, since $S(F_i, G_j)$ is the $(i,j)$-th entry of the matrix $F\cdot S \cdot G^\top$ which is equal to $S$. Therefore, $E[F,G] = E[\trueF, \trueG]$.

To prove the other direction, let $F$ and $G$ be the permutation strategies of agent 1 and 2, respectively, with $F  \neq G$. If the world does not have paired permutations, then $F \cdot S \cdot G^\top \neq S$. Let $\hat{S} = F \cdot S \cdot G^\top$. The expected score for $F,G$ is 
\[
 E[F,G] = \sum_{i,j} \Delta_{i,j} \cdot \hat{S}(i,j)
 \,,
\]
and the expected score for truthful strategies is 
\[
 E[\trueF,\trueG] = \sum_{i,j} \Delta_{i,j} \cdot S(i,j)
 \,.
\]
Combining the facts that $E[\trueF, \trueG] \geq E[F,G]$; $\Delta_{ij} \neq 0$, $\forall i,j$; and $\hat{S}$ differs from $S$ by at least one entry, $E[F,G]$ will be strictly less than $E[\trueF,\trueG]$. 
\end{proof}

Lemma~\ref{lem:clus-sig} shows that when the world has clustered signals, the $\genmech$
mechanism cannot differentiate between individual signals in a
cluster, and is not strongly truthful. Similarly, Lemma~\ref{lem:paired-perm} shows that under paired permutations this mechanism is not able to distinguish whether an agent is reporting the true signals or a particular permutation of the signals. In domains without clustered signals and paired permutations, all strategies (except symmetric permutations) lead to a strictly lesser score than truthful strategies, and hence, the $\genmech$ mechanism is strongly truthful. 

The $\genmech$ mechanism is informed truthful, but not strongly truthful, for the image classification example in Figure~\ref{fig:clus-sig} as there are clustered signals in the model. For the peer assessment example, it is strongly truthful because there are no clustered signals and a further analysis reveals that there are no paired permutations.

A natural question is whether we can do better by somehow
`separating' clustered signals from each other, and `distinguishing' permuted signals from true signals, by giving different scores to
different signal pairs, while retaining the property that
the designer only needs to know $\Sign(\Delta)$.  Specifically,
can we do better if we allow the score for each signal pair $(S_1 = i, S_2 = j)$ to
depend on $i,j$ in addition to $\Sign(\Delta_{ij})$?
We show that this extension does not add any additional power over the $\genmech$ mechanism in terms of strong truthfulness. 

\begin{theorem}
\label{thm:maximally-strong-truthful}
 If $\Delta_{ij} \neq 0$, $\forall i,j$, then $\genmech$ is maximally strong truthful amongst multi-task mechanisms that
only use knowledge of the correlation structure of signals, i.e.\ mechanisms that decide $S(i,j)$ using $\Sign(\Delta_{ij})$ and index $(i,j)$.
\end{theorem}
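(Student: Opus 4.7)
The plan is to prove the contrapositive: if $\Sigma = \Sign(\Delta)$ lies outside the family of sign structures for which $\genmech$ is strongly truthful, then for any mechanism $\mathcal{M}$ in the extended class one can produce a specific distribution $\Delta^*$ with $\Sign(\Delta^*) = \Sigma$ and all entries nonzero on which $\mathcal{M}$ also fails strong truthfulness. Together with Lemma~\ref{lem:clus-sig} and Lemma~\ref{lem:paired-perm}, this certifies that no mechanism in the class can be strongly truthful on every $\Delta$ with a sign structure beyond those handled by $\genmech$.

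First I would reduce the clustered-signals case to paired permutations. If rows $i, i'$ of $\Sigma$ coincide then by symmetry of $\Sigma$ (exchangeability) so do columns $i, i'$, and the swap permutation $P$ of $i, i'$ satisfies $P\Sigma = \Sigma = \Sigma\cdot\mathds{I}$; with $P \ne \mathds{I}$ this is a paired-permutations pair $(P, Q) = (P, \mathds{I})$ with $P \ne Q$. Thus it suffices to assume $\Sigma$ admits paired permutations: distinct $P, Q$ with $P\Sigma = \Sigma Q$, equivalently $\Sigma_{P(i), Q(j)} = \Sigma_{ij}$ for all $i, j$.

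The heart of the argument is a symmetrization construction. Let $\mathcal{H}$ be the group of index-pair transformations generated by $(i, j) \mapsto (P(i), Q(j))$ together with $(i, j) \mapsto (Q(i), P(j))$. The second generator also fixes $\Sigma$: using symmetry, $\Sigma_{Q(i), P(j)} = \Sigma_{P(j), Q(i)} = \Sigma_{ji} = \Sigma_{ij}$. Starting from any valid $\Delta_0$ with $\Sign(\Delta_0) = \Sigma$, define
\[
\Delta^* \;=\; \frac{1}{|\mathcal{H}|}\sum_{h\in\mathcal{H}} h\cdot\Delta_0,
\]
with action $((\pi,\sigma)\cdot \Delta)_{ij} = \Delta_{\pi^{-1}(i),\sigma^{-1}(j)}$. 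I would then verify the four required properties of $\Delta^*$. Row and column sums vanish because they do for every $h\cdot\Delta_0$. Matrix symmetry holds because $\mathcal{H}$ is closed under the component-swap $(\pi,\sigma)\mapsto(\sigma,\pi)$ (true for its two generators, and preserved by the group homomorphism), so using symmetry of $\Delta_0$ the sum for $(\Delta^*)_{ji}$ rearranges to that for $(\Delta^*)_{ij}$. Finally, because $\Sigma$ is $\mathcal{H}$-invariant, every $\mathcal{H}$-orbit of index pairs sits inside a single level set of $\Sigma$, so the summands contributing to any entry of $\Delta^*$ all share the common nonzero sign $\Sigma_{ij}$ and cannot cancel. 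Thus $\Sign(\Delta^*)=\Sigma$ and $\Delta^*_{ij}\ne 0$. By construction $\Delta^*$ is $(P, Q)$-invariant: $\Delta^*_{P(i), Q(j)} = \Delta^*_{ij}$.

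Now evaluate the profile $(F, G) = (P, Q)$ under $\mathcal{M}$ on $\Delta^*$. Because $\mathcal{M}$'s score $S$ depends only on $\Sigma$ and the index pair, and $\Sign(\Delta^*) = \Sigma$, the score matrix is determined. Then
\begin{align*}
E(P,Q)
&= \sum_{i,j}\Delta^*_{ij}\,S(P(i), Q(j))
= \sum_{i',j'}\Delta^*_{P^{-1}(i'), Q^{-1}(j')}\,S(i', j') \\
&= \sum_{i',j'}\Delta^*_{i'j'}\,S(i', j')
= E(\mathds{I},\mathds{I}),
\end{align*}
via the substitution $i' = P(i), j' = Q(j)$ and then $(P, Q)$-invariance of $\Delta^*$. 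Since $P \ne Q$, the profile $(P, Q)$ is not a symmetric permutation, and this equality directly contradicts strong truthfulness of $\mathcal{M}$ on $\Delta^*$.

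The main obstacle will be the simultaneous preservation, under $\mathcal{H}$-averaging, of matrix symmetry, zero row and column sums, the prescribed sign pattern, and non-vanishing of entries. Including both $(P, Q)$ and the ``component-swapped'' $(Q, P)$ in the generators is precisely what forces matrix symmetry of $\Delta^*$; the $\mathcal{H}$-invariance of $\Sigma$ is precisely what prevents sign cancellation when averaging within an orbit. Once $\Delta^*$ is in hand, the change-of-variables calculation yielding $E(P, Q) = E(\mathds{I}, \mathds{I})$ is essentially immediate, and the clustered-signals subcase falls out as $(P, Q) = (P, \mathds{I})$.
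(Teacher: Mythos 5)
Your proposal is correct in its essentials but takes a genuinely different route from the paper. The paper's maximality argument is a \emph{two-distribution} swap: given a $\Delta$ with clustered signals (resp.\ paired permutations), it forms a sibling $\Delta'$ by exchanging the two identical-sign rows (resp.\ setting $\Delta' = P^{-1}\cdot\Delta\cdot (Q^{-1})^\top$); since $\Sign(\Delta')=\Sign(\Delta)$, any mechanism in the class must assign the same score matrix to both, and the identities $E_{\Delta}[\trueF,\trueG]=E_{\Delta'}[F,G]$ and $E_{\Delta'}[\trueF,\trueG]=E_{\Delta}[F,G]$ make the two strict strong-truthfulness inequalities contradict one another. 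You instead construct a \emph{single universal witness}: averaging over the group generated by $(P,Q)$ and $(Q,P)$ yields one $(P,Q)$-invariant $\Delta^*$ with the prescribed sign pattern, on which every sign-based mechanism has the asymmetric permutation profile $(P,Q)$ exactly tying truthful play via the change of variables $E(P,Q)=E(\mathds{I},\mathds{I})$. Your reduction of clustered signals to paired permutations through $(P,\mathds{I})$, using symmetry of $\Sign(\Delta)$, is a unification the paper does not make, and your swap-closure argument for symmetry of $\Delta^*$ and the orbit-level no-cancellation argument (where the all-entries-nonzero hypothesis does the work) are sound. The paper's version buys a failure certificate adjacent to \emph{every} bad distribution with no group theory; yours buys one distribution on which the entire mechanism class fails simultaneously. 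Note also that your argument, like the paper's, never uses $S\in\{0,1\}$, so both correctly cover arbitrary real-valued score matrices in the class.

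There is one step you assert rather than prove: that $\Delta^*$ is a \emph{valid} Delta matrix. Zero row/column sums, symmetry, and a sign pattern do not by themselves exhibit a joint signal distribution; you need a probability vector $p$ with $\Delta^* + pp^\top \ge 0$ entrywise. This is patchable in one line: for any $p$ with positive entries, $c\Delta^* + pp^\top$ is a symmetric nonnegative joint distribution with marginals $p$ for small $c>0$, and scaling by $c$ preserves the invariance, the signs, and the tie $E(P,Q)=E(\mathds{I},\mathds{I})$. A subtler issue: in the clustered case, \emph{any} $(P,\mathds{I})$-invariant matrix has identical rows $i,i'$ (and, by symmetry, identical columns), so with uniform marginals your $\Delta^*$ violates the model's stochastic relevance assumption and falls outside the family over which the mechanism is quantified. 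Choosing marginals with distinct entries rescues this, since with all entries of $\Delta^*$ nonzero the conditionals $\Delta^*_{ss'}/p_{s'}+p_s$ separate for $p_{s'}\neq p_{s''}$. To be fair, the paper's own proof is looser on exactly this front---its row-swapped $\Delta'$ is generally asymmetric, sitting uneasily with the exchangeability assumed in the model section---so your construction is, if anything, more careful about preserving symmetry, and the gaps above are repairable without changing your architecture.
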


\begin{proof}

We first show that the $\genmech$ mechanism is strongly truthful if the signal distribution has neither clustered signals nor paired permutations. This follows directly from Lemmas~\ref{lem:clus-sig} and~\ref{lem:paired-perm}, as strategy profiles in which any agent uses a non-permutation strategy or both agents use an asymmetric permutation strategy lead to strictly lower expected score than truthful strategies.

Next we show maximality by proving that 
if a signal distribution has either clustered signals or paired permutations then 
there do not exist any strong truthful multi-task mechanisms that only use the correlation structure of signals.

We prove this by contradiction. Suppose there exists a strongly truthful mechanism for the 
given signal distribution which computes the scoring matrix using the correlation structure of signals. Let the scoring matrix for the signal distribution be $S$.

If the signal distribution has clustered signals then at least two rows or columns in $\Sign(\Delta)$ are identical. 

Suppose that there exist $i \neq i'$, such that the $i$-th and $i'$-th row in $\Sign(\Delta)$ are identical. 
We will construct another delta matrix $\Delta'$ representing a signal distribution that has clustered signals, for which this mechanism cannot be simultaneously strongly truthful. 

Let $\Delta'$ be computed by exchanging rows $i$ and $i'$ of $\Delta$. 
Clearly, $\Delta'$ has clustered signals. Now, the scoring matrix for both $\Delta$ and $\Delta'$ is the same, since the sign structure is the same for both. Let $G = \trueG$ and $F$ be computed by exchanging rows $i$ and $i'$ of $\trueF$.

Strong truthfulness for $\Delta$ implies that

\begin{equation}
 \label{eqn:strong-delta}
 E_{\Delta}[\trueF,\trueG] > E_{\Delta}[F,G]
 \,.
\end{equation}

However, observe that $E_{\Delta}[\trueF,\trueG] = E_{\Delta'}[F,G]$ 
and $E_{\Delta'}[\trueF,\trueG] = E_{\Delta}[F,G]$. 
Strong truthfulness for $\Delta'$ implies that

\begin{equation}
 \label{eqn:strong-delta'}
 E_{\Delta'}[\trueF,\trueG] > E_{\Delta'}[F,G] \implies E_{\Delta}[\trueF,\trueG] < E_{\Delta}[F,G]
 \,.
\end{equation}

Equation~\ref{eqn:strong-delta} and~\ref{eqn:strong-delta'} lead to a contradiction, implying that the above mechanism cannot be strongly truthful.

Similarly, we can show that if two columns in $\Sign(\Delta)$ are identical,
then there exists another 
delta matrix $\Delta'$ formed by exchanging the columns of
the $\Delta$ for $j\neq j'$ such that the $j$-th and $j'$-th column of $\Sign(\Delta)$ are identical. 
A similar contradiction can be reached using strong truthfulness on $\Delta$ and $\Delta'$.

The interesting case is when the signal distribution satisfies paired permutations,
i.e.\ there exist permutation matrices $P\neq Q$ such that $P\cdot S \cdot Q^\top = S$. Consider $\Delta' = (P^{-1}) \cdot \Delta \cdot (Q^{-1})^\top$, $F = P$, and $G = Q$. We need to argue that $\Delta'$ represents a correct signal distribution and that it has paired permutations.

To see this, observe that exchanging the columns or rows 
of a delta matrix leads to a valid delta matrix, 
and pre-multiplying or post-multiplying a matrix with 
permutation matrices only exchanges rows or columns, respectively. 
Observe that the sign structure of $\Delta'$ is 
the same as the sign structure of $\Delta$ since 
$S = (P^{-1}) \cdot S \cdot (Q^{-1})^\top$, and therefore, 
the scoring matrix for both $\Delta$ and $\Delta'$ is the same. 
Due to this $\Delta'$ has paired permutations.

Strong truthfulness for $\Delta$ implies that

\begin{equation}
 \label{eqn:strong-delta-2}
 E_{\Delta}[\trueF,\trueG] > E_{\Delta}[F,G]
 \,.
\end{equation}

However, again observe that $E_{\Delta}[\trueF,\trueG] = E_{\Delta'}[F,G]$ 
and $E_{\Delta'}[\trueF,\trueG] = E_{\Delta}[F,G]$ 
Strong truthfulness for $\Delta'$ implies that

\begin{equation}
 \label{eqn:strong-delta'-2}
 E_{\Delta'}[\trueF,\trueG] > E_{\Delta'}[F,G] \implies E_{\Delta}[\trueF,\trueG] < E_{\Delta}[F,G]
 \,.
\end{equation}

Equation~\ref{eqn:strong-delta-2} and~\ref{eqn:strong-delta'-2} lead to a contradiction, implying that the above mechanism cannot be strongly truthful.

Therefore, if the signal distribution has either clustered signals or paired permutations there exist no strongly truthful scoring mechanism that assigns scores based on the correlation structure of $\Delta$.
\end{proof}

This result shows that if a multi-task mechanism only relies on the correlation structure and is strongly truthful in some world model 
then the $\genmech$ mechanism will also be strongly truthful in that world model. 
Therefore, even if one uses $2\cdot n^2$ parameters in the design of scoring matrices from $\Sign(\Delta)$, one can only be strongly truthful in the worlds where $\genmech$ mechanism is strongly truthful, which only uses 2 parameters. 

\begin{figure}
\centering
    \includegraphics[width=0.4\textwidth]{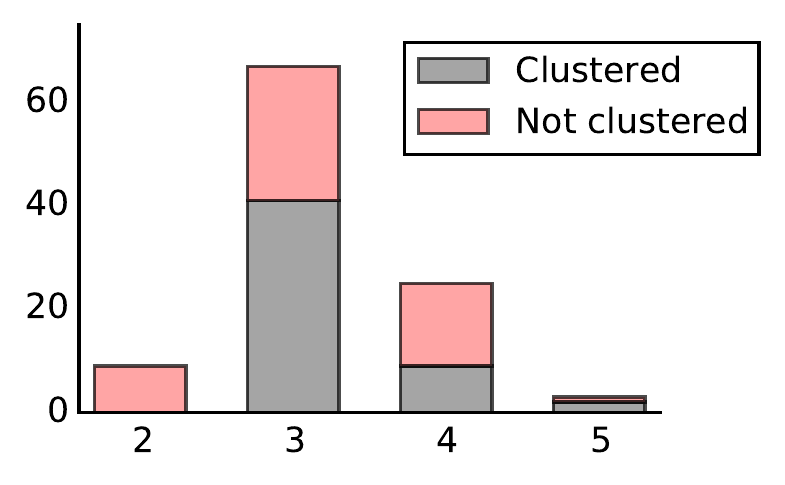}
\caption{Number of MOOC peer assessment models with clustered signals ($\genmech$ is informed truthful) and without clustered signals ($\genmech$ is strongly truthful up to paired permutations).
\label{fig:informed-strong-breakdown}}
\end{figure}

A remaining question is whether strongly truthful mechanisms 
can be designed when the score matrix can depend on the exact value of the $\Delta$ matrix.  We answer this question negatively.
\begin{theorem}
\label{lem:strong-truthful-impossibility}
There exist symmetric signal distributions such that no multi-task mechanism is strongly truthful.
\end{theorem}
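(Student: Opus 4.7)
The plan is to exhibit one specific symmetric signal distribution whose Delta matrix carries a built-in ``paired permutation'' at a stronger level than the one used in Theorem~\ref{thm:maximally-strong-truthful}: distinct permutations $P$ and $Q$ such that $\Delta_{ij} = \Delta_{P(i),\, Q(j)}$ for every $(i,j)$. Once $\Delta$ itself (not only $\Sign(\Delta)$) is invariant under the joint action $(i,j)\mapsto(P(i),Q(j))$, the expected payment of the profile $(F,G)=(P,Q)$ matches that of truthful reporting for \emph{every} choice of score matrix $S$, so no mechanism in the class---however elaborately $S$ depends on the exact numerical values of $\Delta$---can separate the truthful profile from the asymmetric permutation $(P,Q)$.

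Concretely, I would take $n=3$ and work with
\[
\Delta \;=\; \epsilon\begin{pmatrix} -1 & 2 & -1 \\ 2 & -1 & -1 \\ -1 & -1 & 2 \end{pmatrix},
\]
realised by the joint distribution $P(S_1{=}i,S_2{=}j) = \Delta_{ij} + 1/9$ with uniform marginals $P(S_k{=}i)=1/3$. For $\epsilon\le 1/18$ this is a non-negative, exchangeable joint distribution; its three columns are pairwise distinct, so stochastic relevance holds. Then I would pick the two transpositions $P=(1\;3)$ and $Q=(2\;3)$ and check entrywise that the orbits of the action $(i,j)\mapsto(P(i),Q(j))$ on $\{1,2,3\}^2$ are $\{(1,1),(3,1)\}$, $\{(1,2),(3,3)\}$, $\{(1,3),(3,2)\}$, $\{(2,1)\}$, and $\{(2,2),(2,3)\}$, and that $\Delta$ is constant on each orbit ($-1$ or $2$ respectively). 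Hence $\Delta_{P(i),Q(j)}=\Delta_{ij}$ for all nine pairs.

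The conclusion is a one-line change of variables in the expected-payment expression from equation~\eqref{eq:expected-score-delta-determ}. Using that $P$ and $Q$ are involutions, the substitution $i'=P(i),\, j'=Q(j)$ gives
\[
E(P,Q) \;=\; \sum_{i,j}\Delta_{ij}\,S(P(i),Q(j)) \;=\; \sum_{i',j'}\Delta_{P(i'),Q(j')}\,S(i',j') \;=\; \sum_{i',j'}\Delta_{i',j'}\,S(i',j') \;=\; E(\trueF,\trueG),
\]
for every score matrix $S$. Since $P\ne Q$, the profile $(P,Q)$ is not ``both the same permutation strategy,'' so the equality $E(P,Q)=E(\trueF,\trueG)$ rules out strong truthfulness of any multi-task mechanism whose expected payment takes the form~\eqref{eq:expected-score-delta-determ} with this $\Delta$.

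The main obstacle is just the construction itself: finding a $\Delta$ admitting the $(P,Q)$-symmetry while simultaneously being symmetric in the exchangeability sense, having zero row and column sums, corresponding to non-negative probabilities, and satisfying stochastic relevance. The invariance $\Delta_{ij}=\Delta_{P(i),Q(j)}$ forces entries within each orbit to be equal, which is quite restrictive; the $3\times 3$ example above is arranged so that the two positive entries at $(1,2),(2,1)$ are compatible with the orbit $\{(1,2),(3,3)\}$, while the negative entries fall into their own orbits. Once this distribution is in hand, the rest of the argument is the change-of-variables computation above, with no dependence on the form of $S$.
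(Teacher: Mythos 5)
Your proof is correct, and it reaches the impossibility by a genuinely different route from the paper. The paper's own proof fixes the two-parameter family of symmetric matrices with rows $(x,\,y,\,-(x+y))$, $(y,\,x,\,-(x+y))$, $(-(x+y),\,-(x+y),\,2(x+y))$ for $0<y<x\le 0.5$, leaves all nine entries of the score matrix as free variables, and writes out the strict inequalities that strong truthfulness imposes against three specific deviation profiles (an asymmetric profile where only agent 1 swaps signals 1 and 2, and the two symmetric profiles merging signals 1 and 2 onto a common report); algebraic combination of these strict inequalities yields $a(x+y) < d(x+y)+(f+h-g-e)(-x-y)$ and its reverse, an infeasible system, so no score matrix can work. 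You instead construct a single $\Delta$ that is \emph{exactly} invariant under the joint relabeling $(i,j)\mapsto(P(i),Q(j))$ for distinct transpositions $P=(1\;3)$, $Q=(2\;3)$ --- the value-level strengthening of the sign-level paired-permutations condition from Theorem~\ref{thm:maximally-strong-truthful} --- so that a change of variables makes the asymmetric permutation profile $(P,Q)$ tie truthful reporting for \emph{every} score matrix $S$; since the definition of strong truthfulness permits equality only when both agents play the same permutation, this single tie already rules out every mechanism in the class. I verified your instance: $\Delta$ is symmetric with zero row and column sums; the five orbits are as you list them and $\Delta$ is constant ($-\epsilon$ or $2\epsilon$) on each, so $\Delta_{P(i),Q(j)}=\Delta_{ij}$ holds at all nine entries; $\Delta_{ij}+1/9$ is a valid exchangeable joint with uniform marginals for your range of $\epsilon$; and the conditional distributions are pairwise distinct, so stochastic relevance holds. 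As for what each approach buys: yours is shorter and pinpoints the structural cause of impossibility --- truth and an asymmetric relabeling are statistically indistinguishable to any score-matrix mechanism, even one with exact knowledge of $\Delta$, and by linearity of the expected payment in $S$ the tie also defeats mechanisms that randomize over score matrices. The paper's infeasibility argument, by contrast, never exhibits an exact tie: it shows that for each fixed $S$ \emph{some} deviation weakly beats truth, and it does so on a continuum of instances with no degenerate symmetry (generic $x\neq y$), demonstrating that the obstruction is not confined to the knife-edge invariant constructions your method requires --- perturbing your $\Delta$ slightly breaks the invariance, whereas the paper's contradiction is robust across its parameter range.
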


\begin{proof}
Let $n=3$, and consider any symmetric $\Delta$ matrix of the form:

\[
\Delta = 
\begin{bmatrix}
x & y & -(x+y) \\
y & x & -(x+y) \\
-(x+y) & -(x+y) & 2(x+y) \end{bmatrix}
  \,,
\]

for some $0 < y<x \leq 0.5$, and let 

\[
S =
\begin{bmatrix}
a & b & e \\
c & d & f \\
g & h & i \end{bmatrix}
  \,,
\]

for some $a,b,c,d,e,f,g,h,i$ which can be selected using complete knowledge of $\Delta$.

We will consider three strategy profiles $(F^1,G^1), (F^2,G^2), (F^3,G^3)$, with

\[
F^1 = 
\begin{bmatrix}
0 & 1 & 0  \\
1 & 0 & 0 \\
0 & 0 & 1 \end{bmatrix}
\qquad
\qquad
G^1 = 
\begin{bmatrix}
1 & 0 & 0  \\
0 & 1 & 0 \\
0 & 0 & 1 \end{bmatrix}
\;,
\]

\[
F^2 = 
\begin{bmatrix}
1 & 0 & 0  \\
1 & 0 & 0 \\
0 & 0 & 1 \end{bmatrix}
\qquad
\qquad
G^2 = 
\begin{bmatrix}
1 & 0 & 0  \\
1 & 0 & 0 \\
0 & 0 & 1 \end{bmatrix}
  \,,
\]

and

\[
F^3 = 
\begin{bmatrix}
0 & 1 & 0  \\
0 & 1 & 0 \\
0 & 0 & 1 \end{bmatrix}
\qquad
\qquad
G^3 = 
\begin{bmatrix}
0 & 1 & 0  \\
0 & 1 & 0 \\
0 & 0 & 1 \end{bmatrix}
  \,.
\]

Using strong truthfulness condition $E[\trueF, \trueG] > E[F^1, G^1]$, we get

\begin{eqnarray}
\label{eqn:strong-f1-g1}
\qquad ax + by + cy + dx \;&>&\; cx + dy + ay + bx \qquad \nonumber \\
\qquad (a+d)(x-y)\;&>&\; (c+b)(x-y) \qquad \qquad \nonumber\\
\qquad a+d\;&>&\; c+b \qquad \qquad\qquad
 \;,
\end{eqnarray}

where the last inequality follows due to the fact that $x>y$. 

Using strong truthfulness condition $E[\trueF, \trueG] > E[F^2, G^2]$, we get

\begin{eqnarray}
\label{eqn:strong-f2-g2}
\qquad by + cy > -dx + a(2y + x) +(g+e-f-h)(-x-y) \qquad
 \,
\end{eqnarray}

and again using strong truthfulness condition $E[\trueF, \trueG] > E[F^3, G^3]$, we get

\begin{eqnarray}
\label{eqn:strong-f3-g3}
\qquad by + cy > -ax + d(2y + x) +(f+h-g-e)(-x-y) \qquad 
\end{eqnarray}

Now, multiplying equation~\ref{eqn:strong-f1-g1} by $y$ and combining equation it with equation~\ref{eqn:strong-f2-g2}, we get

\begin{eqnarray}
\qquad -dx + a(2y + x) +(g+e-f-h)(-x-y) \;\;<\;\; by + cy \;\;<\;\; ay+ dy \qquad \nonumber \\
\qquad  \implies -dx + a(2y + x) +(g+e-f-h)(-x-y) \;\;< \;\; ay+ dy \qquad \nonumber\\
\qquad  \implies a(x+y) \;\;< \;\; d(x+y) + (f+h-g-e)(-x-y)  \qquad
\label{eqn:combine-f1-g1-f2-g2}
\end{eqnarray}

Similarly, equation~\ref{eqn:strong-f1-g1} by $y$ and combining equation it with equation~\ref{eqn:strong-f3-g3}, we get

\begin{eqnarray}
\qquad -ax + d(2y + x) +(f+h-g-e)(-x-y) \;\;<\;\; by + cy \;\;<\;\; ay+ dy \qquad \nonumber \\
\qquad  \implies -ax + d(2y + x) +(f+h-g-e)(-x-y) \;\;< \;\; ay+ dy \qquad \nonumber\\
\qquad \implies  d(x+y) + (f+h-g-e)(-x-y) \;\;< \;\; a(x+y)\qquad
\label{eqn:combine-f1-g1-f3-g3}
\end{eqnarray}

Equation~\ref{eqn:combine-f1-g1-f2-g2} and~\ref{eqn:combine-f1-g1-f3-g3} lead to a contradiction, implying that there does not exist any $a,b,c,d,e,f,g,h,i$ that can satisfy these equations simultaneously.
Therefore, for matrices of the above form there do not exist any strongly truthful scoring matrices.
\end{proof}

Figure~\ref{fig:informed-strong-breakdown} evaluates the sign
structure of the $\Delta$ matrix for the 104 MOOC questions described
earlier. The $\genmech$ mechanism is strongly truthful up to paired
permutations when signals are not clustered, and thus in roughly half
of the worlds.

\subsection{Detail-Free Implementation of the $\genmech$ Mechanism}

So far we have assumed that the $\genmech$ mechanism has access to the sign
structure of $\Delta$. In practice, the signs may be unknown, or
partially known (e.g. the designer may know or assume that the
diagonal of $\Delta$ is positive, but be uncertain about other signs).

The $\genmech$ mechanism can be made detail-free in a
straightforward way by estimating correlation and thus the score
matrix from reports; it remains informed truthful if the
number of tasks is large (even allowing for the new concern that reports
affect the estimation of the distribution and thus the choice of score
matrix.)

\begin{definition}[The $\genmech$ Detail-Free Mechanism (CA-DF)]
As usual, we state the mechanism for two agents for notational simplicity:
\begin{enumerate}
\item Each agent completes $m$ tasks, providing $m$ pairs of reports.
\item Randomly split the tasks into sets $A$ and $B$ of equal size.
\item Let $T^A, T^B$ be the empirical joint distributions of reports on the bonus tasks in $A$ and $B$, with $T^A(i,j)$ the observed frequency of signals $i,j$. Also, let $T^A_M, T^B_M$ be the empirical marginal distribution of reports computed on the penalty tasks in $A$ and $B$, respectively, with $T^A_M(i)$ the observed frequency of signal $i$. Note that we only take one sample per task to ensure the independence of samples.
\item Compute the empirical estimate of the Delta matrix, based on reports rather than signals: $\Gamma^A_{ij} = T^A(i,j)-T^A_M(i)T^A_M(j)$, and similarly for $\Gamma^B$.
\item Define score matrices, \emph{swapping task sets}: $S^A=\Sign(\Gamma^B)$, $S^B=\Sign(\Gamma^A)$. Note that $S^A$ does not depend on the reports on tasks in $A$.
\item Apply the $\genmech$ mechanism separately to tasks in set $A$ and set $B$, using score matrix $S^A$ and $S^B$ for tasks in set $A$ and $B$, respectively.
\end{enumerate}
\end{definition}
\begin{lemma}
  \label{lem:detailfree-1}
  For all strategies $F,G$ and all score matrices $S \in \{0,1\}^{n\times n}$, $E(\trueS,\trueF,\trueG) \geq E(S,F,G)$ in the multi-task mechanism, where $E(S,F,G)$ is the expected score of the mechanism with a fixed score matrix $S$.
\end{lemma}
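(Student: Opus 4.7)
The plan is to observe that any combination of a $\{0,1\}$-valued score matrix $S$ with (possibly mixed) strategies $F,G$ effectively produces, for each signal pair $(i,j)$, a number in $[0,1]$ that gets multiplied by $\Delta_{ij}$; the truthful choice with $S^* = \Sign(\Delta)$ then trivially maximizes each summand pointwise.

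Concretely, I would start from the expected-score formula in equation~\eqref{eq:expected-score-delta-gen}, writing it as
\[ E(S,F,G) = \sum_{i,j} \Delta_{ij}\, T_{ij}, \qquad T_{ij} := \sum_{r_1,r_2} S(r_1,r_2)\, F_{ir_1}\, G_{jr_2}. \]
Since $F$ and $G$ are row-stochastic, the weights $F_{ir_1} G_{jr_2}$ form a probability distribution over $(r_1,r_2)$ for each fixed $(i,j)$, so $T_{ij}$ is a convex combination of the $\{0,1\}$ entries of $S$ and hence lies in $[0,1]$.

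Next, I would argue pointwise: for each $(i,j)$, the product $\Delta_{ij}\, T_{ij}$ is bounded above by $\max(\Delta_{ij},0) = \Delta_{ij}\,\Sign(\Delta_{ij})$, because if $\Delta_{ij}>0$ the bound $T_{ij}\le 1$ gives the result, if $\Delta_{ij}<0$ the bound $T_{ij}\ge 0$ does, and if $\Delta_{ij}=0$ both sides vanish. Summing over $(i,j)$ yields
\[ E(S,F,G) \;\le\; \sum_{i,j} \Delta_{ij}\, \Sign(\Delta_{ij}) \;=\; \sum_{i,j} \Delta_{ij}\, \trueS(i,j). \]
Finally, I would identify the right-hand side with $E(\trueS, \trueF, \trueG)$ by plugging the identity matrix in for $F$ and $G$ in the definition of $T_{ij}$ (which collapses the inner sum to $\trueS(i,j)$), completing the proof.

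There is no real obstacle here: the argument is essentially the same pointwise-domination idea used in Theorem~\ref{thm:01-informed-truthful}, except that we now also vary $S$ over $\{0,1\}^{n\times n}$. The only thing to be careful about is emphasizing that $T_{ij}\in[0,1]$ holds because $S$ is $\{0,1\}$-valued and $F,G$ are stochastic, which is exactly what makes the $\Sign(\Delta)$ choice optimal simultaneously over $S$ and over strategies.
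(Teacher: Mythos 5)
Your proposal is correct and uses essentially the same argument as the paper: bound each term $\Delta_{ij}$ times a score value in $[0,1]$ pointwise by $\max(\Delta_{ij},0)$, so that the sum of positive entries of $\Delta$ --- exactly the truthful score under $\trueS = \Sign(\Delta)$ --- dominates $E(S,F,G)$ for every $S$, $F$, $G$. The only cosmetic difference is that you handle mixed strategies directly via the convex combination $T_{ij}\in[0,1]$, whereas the paper writes the bound for deterministic strategies (having already reduced to them via Lemma~\ref{lem:opt-is-deterministic}); both routes are sound.
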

\begin{proof}
The expected score for arbitrary score matrix and strategies is:
\[E(S,F,G) = \sum_{i=1}^n\sum_{j=1}^n\Delta_{ij} S(F_i,G_j)\]
The expected score for truthful reporting with $\trueS$ is
\begin{align*}
E(\trueS,\trueF,\trueG) = \sum_{i=1}^n\sum_{j=1}^n\Delta_{ij} \Sign(\Delta)_{ij} = \sum_{i,j: \Delta_{ij}>0} \Delta_{ij} \ge \sum_{i=1}^n\sum_{j=1}^n\Delta_{ij} S(F_i, G_j),
\end{align*}
where the inequality follows because $S$ is a 0/1 matrix.
\end{proof}

The lemma gives the main intuition for why $\genmech$-DF is informed truthful for large $m$: even if agents could set the score matrix completely independently of their strategies, the ``truthful'' score matrix $\trueS$ is the one that maximizes payoffs. To get a precise result, the following theorem shows that a score matrix ``close'' to $\trueS$ will be chosen with high enough probability. 
\begin{theorem}[Mechanism $\genmech$-DF is ($\epsilon,\delta$)-informed truthful]
\label{thm:01DF-detail-free}
Let $\epsilon>0$ and $\delta>0$ be parameters. Then there exists 
a number of tasks $m=O(n^{3}\log(1/\delta)/\epsilon^2)$ (for $n$ signals), such that with probability at least $1-\delta$,
there is no strategy profile with expected score more than $\epsilon$ above truthful reporting, and any uninformed strategy has expected score strictly less than truthful. Formally, with probability at least $1-\delta$, 
$E(F,G) \le E(\trueF,\trueG)+\epsilon,$
for all strategy pairs $F,G$; for any uninformed strategy $F_0$ (equivalently $G_0$), $E(F_0,G) < E(\trueF,\trueG)$.
\end{theorem}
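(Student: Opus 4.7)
The plan is to combine concentration of $\Gamma^A, \Gamma^B$ with the deterministic worst-case bound from Lemma~\ref{lem:detailfree-1}. Fix arbitrary deterministic strategies $F, G$ (which suffice by Lemma~\ref{lem:opt-is-deterministic}). Under $F, G$, the reports on tasks in $B$ are i.i.d., so Hoeffding's inequality applied to $T^B(r_1,r_2)$, $T^B_M(r_1)$, and $T^B_M(r_2)$, together with a union bound over the $n^2$ entries, yields
\[
\|\Gamma^B - \Delta^{F,G}\|_\infty \le \tau
\]
with probability at least $1-\delta$, provided $m = \Omega(\log(n^2/\delta)/\tau^2)$. Here $\Delta^{F,G}_{r_1,r_2} := P(R_1{=}r_1,R_2{=}r_2) - P(R_1{=}r_1)P(R_2{=}r_2)$ is the report-level Delta matrix induced by $F, G$. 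The random split into $A$ and $B$ ensures that $S^A = \Sign(\Gamma^B)$ is independent of the task-$A$ reports it will score (and symmetrically for $S^B$).

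The key conceptual step is that Lemma~\ref{lem:detailfree-1} gives the deterministic inequality $E(S, F, G) \le E(\trueS, \trueF, \trueG)$ for \emph{every} $S \in \{0,1\}^{n\times n}$ and every $F, G$, not only the truthful $S = \trueS$. Since $S^A$ always lies in this set, we obtain $E_{\mathrm{mech}}(F, G) = E_{S^A}[E(S^A, F, G)] \le E(\trueS, \trueF, \trueG)$ for every strategy profile, with no concentration argument needed for the deviator. It remains to lower bound the truthful mechanism payoff. Under truthful reporting, $\Gamma^B$ concentrates around the true $\Delta$, and $\Sign(\Gamma^B)_{ij}$ can differ from $\Sign(\Delta)_{ij}$ only on entries with $|\Delta_{ij}| \le \tau$, giving
\[
E(\trueS, \trueF, \trueG) - E(S^A, \trueF, \trueG) \le \sum_{i,j:\,|\Delta_{ij}|\le \tau} |\Delta_{ij}| \le n^2 \tau
\]
with probability at least $1-\delta$.

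Choosing $\tau$ on the order of $\epsilon/n^2$ and inverting the Hoeffding bound gives the claimed sample complexity, and combining the two inequalities yields $E_{\mathrm{mech}}(F, G) \le E_{\mathrm{mech}}(\trueF, \trueG) + \epsilon$ with probability at least $1-\delta$. For the uninformed case, if $F_0$ is uninformed with $F_0{}_i = r$ for all $i$, then for any score matrix $S$ and any $G$, $E(S, F_0, G) = \sum_j S(r, G_j)\sum_i \Delta_{ij} = 0$ because each column of $\Delta$ sums to zero (and symmetrically if $G$ is uninformed). Since the truthful mechanism payoff is close to $E(\trueS, \trueF, \trueG) > 0$ with high probability, uninformed strategies are strictly worse than truthful. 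The main obstacle is that $S^A$ depends on the agents' reports, creating a feedback loop that could in principle be exploited; Lemma~\ref{lem:detailfree-1} cleanly sidesteps this by bounding the deviator's payoff for \emph{every} possible $\{0,1\}$-valued score matrix, so no manipulation of $S^A$ can beat the truthful baseline.
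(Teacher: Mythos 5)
Your overall architecture coincides with the paper's: using Lemma~\ref{lem:detailfree-1} as a deterministic bound holding for \emph{every} $\{0,1\}$-valued score matrix is exactly how the paper handles the deviator (and the feedback loop through $S^A$) with no concentration argument; the $A$/$B$ swap giving independence of $S^A$ from the $A$-task reports, and the uninformed-strategy argument (columns of $\Delta$ sum to zero, hence payoff $0$ under any score matrix, versus a strictly positive truthful payoff once $\epsilon$ is small enough) are likewise the paper's steps. The one substantive divergence is the concentration step for the truthful lower bound, and there you have a genuine quantitative gap: controlling $\|\Gamma^B-\Delta\|_\infty\le\tau$ by entrywise Hoeffding and then paying $\sum_{i,j:\,|\Delta_{ij}|\le\tau}|\Delta_{ij}|\le n^2\tau$ forces $\tau$ on the order of $\epsilon/n^2$, hence $m=\Omega\bigl(n^4\log(n^2/\delta)/\epsilon^2\bigr)$ after inverting Hoeffding with the union bound over the $n^2$ entries. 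That falls short of the theorem's stated $m=O(n^{3}\log(1/\delta)/\epsilon^2)$ by roughly a factor of $n\log n$: your argument proves $(\epsilon,\delta)$-informed truthfulness, but not at the claimed sample complexity.

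The paper avoids this $\ell_\infty$-to-$\ell_1$ loss by working in $\ell_1$ throughout. It bounds the truthful loss directly by the $\ell_1$ estimation error, via the observation (which you have in sup-norm form) that a sign disagreement at $(i,j)$ implies $|\Delta_{ij}|\le|\Delta_{ij}-\DeltaSup{B}_{ij}|$, so that
\begin{equation*}
\Bigl|\sum_{i,j}\Delta_{ij}\bigl(\Sign(\DeltaSup{B})_{ij}-\Sign(\Delta)_{ij}\bigr)\Bigr| \;\le\; \sum_{i,j}\bigl|\Delta_{ij}-\DeltaSup{B}_{ij}\bigr|~,
\end{equation*}
and then invokes the density-estimation result of Devroye and Lugosi: a distribution on a finite domain $\Omega$ is learnable to $\ell_1$ distance $d$ with $O(|\Omega|\log(1/\delta)/d^2)$ samples. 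Learning the joint report distribution to $\ell_1$ accuracy $\epsilon/3$ costs $O(n^2/\epsilon^2)$, while the marginal must be learned to accuracy $\epsilon/(3n)$ --- the extra factor of $n$ arises because the marginal error enters through the products $M^B_iM^B_j$ and gets summed over $n$ columns --- costing $O(n\cdot n^2/\epsilon^2)=O(n^3/\epsilon^2)$, which is the bottleneck that yields $n^3$. Substituting this $\ell_1$ step for your entrywise Hoeffding step makes your proof match the theorem as stated; everything else in your write-up is sound (the concentration claim for arbitrary $F,G$ in your first paragraph is never used and can simply be dropped).
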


\begin{proof}
Let $H^A$ and $H^B$ be the (unobserved) joint signal frequencies, which are a sample from the true joint distribution. Let $M^A$ and $M^B$ be the (unobserved) marginal signal frequencies, which are a sample from the true marginal distribution. Finally, let $\DeltaSup{A}$ and $\DeltaSup{B}$ the corresponding empirical Delta matrices. Fixing strategies $F,G$, $S^A$ is a function of $H^B$ and $M^B$, and independent of $H^A$ and $M^A$. This means that we can write the expected score for tasks in $A$ as 
\begin{align}
E(S^A,F,G) = \sum_{i=1}^n\sum_{j=1}^n\Delta_{ij}S^A(F_i,G_j).
\end{align}
By Lemma~\ref{lem:detailfree-1}, we know that $E(\trueS,\trueF,\trueG) \ge E(S,F,G)$ for all $S,F,G$, and will show that once $m$ is large enough, being truthful gets close to this score with high probability.
We have
\begin{align}
 |E(S_A,\trueF,\trueG) - E(\trueS,\trueF,\trueG)| &=  |E(\Sign(\DeltaSup{B}),\trueF,\trueG) - E(\Sign(\Delta),\trueF,\trueG)| \\
 &= |\sum_{i=1}^n\sum_{j=1}^n\Delta_{ij} (\Sign(\DeltaSup{B})_{ij} - \Sign(\Delta)_{ij})| ~.
\end{align}

Therefore, for some accuracy $\epsilon$ and confidence $\delta$, with $m = O(n^3\log(1/\delta)/\epsilon^2)$, we want

\begin{align}
\label{eq:target}
&|\sum_{i=1}^n\sum_{j=1}^n\Delta_{ij} (\Sign(\DeltaSup{B})_{ij} - \Sign(\Delta)_{ij})| \le \epsilon~.
\end{align}

Observe that

\begin{align}
|\sum_{i,j}\Delta_{ij} (\Sign(\DeltaSup{B})_{ij} - \Sign(\Delta)_{ij})| &\le \sum_{i,j}| \Delta_{ij} (\Sign(\DeltaSup{B})_{ij} - \Sign(\Delta)_{ij})| \\
 &\le \sum_{i,j} | \Delta_{ij} - \DeltaSup{B}_{ij}|~.
\end{align}

Therefore, it is sufficient to learn $\DeltaSup{B}$ such that 

\begin{align}
\label{eq:learnt}
&\sum_{i=1}^n\sum_{j=1}^n| \Delta_{ij} - \DeltaSup{B}_{ij}| \le \epsilon~.
\end{align}

We now use a standard result (see e.g.~\cite{devroye2001combinatorial}, Theorems 2.2 and 3.1), that any distribution over finite domain $\Omega$ is learnable within L1 distance $d$ in $O(|\Omega|/d^2)$ samples with high probability, specifically $1-\delta$ with an additional $\log(1/\delta)$ factor. 

Using this result we can learn the joint signal distribution of the agents using $O(9n^2/\epsilon^2)$ samples with accuracy $\epsilon/3$. We can also learn the marginal distribution of agents' signals using $O(9n^3/\epsilon^2)$ samples from the true marginal distribution with accuracy $\epsilon/3n$. With high probability, after these many samples from each of these distributions, we have

\begin{align}
\sum_{i=1}^n\sum_{j=1}^n| P_{ij} - H^{B}_{ij}| &\le \frac{\epsilon}{3} \label{eq:learnt-joint} \\
\sum_{i=1}^n| P_{i} - M^B_{i}| &\le \frac{\epsilon}{3n} \label{eq:learnt-marginal} ~.
\end{align}

Now,

\begin{align}
\sum_{i,j} | \Delta_{ij} - \DeltaSup{B}_{ij}|  &=  \sum_{i,j} |P_{ij} - H^B_{ij} - (P_i P_j - M^B_i M^B_j)|\\ 
 & \le  \sum_{i,j} |P_{ij} - H^B_{ij}| + \sum_{i,j}|P_i P_j - M^B_i M^B_j|  \quad \text{(Triangle Ineq.)}\\
 & \le \frac{\epsilon}{3} + \sum_{i,j}|P_i P_j - M^B_i \left(P_j \pm \frac{\epsilon}{3n} \right) |   \qquad \text{(Using Eq.~\ref{eq:learnt-joint} \& ~\ref{eq:learnt-marginal} )}\\
 & =  \frac{\epsilon}{3} + \sum_{i,j}|P_i P_j - M^B_i P_j \pm M^B_i \frac{\epsilon}{3n} | \\
 & \le  \frac{\epsilon}{3} + \sum_{i,j}| \left(P_i  - M^B_i \right) P_j| + \sum_{i,j} M^B_i \frac{\epsilon}{3n}   \quad \text{(Triangle Ineq.)} \\
 & = \frac{\epsilon}{3} + \sum_{i,j} P_j |  P_i  - M^B_i  | + \sum_{i,j} M^B_i \frac{\epsilon}{3n}  \\
 & = \frac{\epsilon}{3} + \sum_{i,j} P_j | P_i  - M^B_i  | + \sum_{j} \frac{\epsilon}{3n}  \\
 & \leq \frac{\epsilon}{3} + \sum_{j=1}^n \sum_{i=1}^n | P_i  - M^B_i  | + n \frac{\epsilon}{3n}   \qquad \qquad \qquad \text{(} |P_j|\leq 1 \text{)} \\
  & \leq \frac{\epsilon}{3} + \sum_{j=1}^n \frac{\epsilon}{3n} + \frac{\epsilon}{3}  \qquad \qquad\qquad\qquad\qquad\quad\text{(Using Eq.~\ref{eq:learnt-marginal})}\\
    & = \epsilon ~.
\end{align}

We now conclude
\begin{align}
 |E(S_A,\trueF,\trueG) - E(\trueS,\trueF,\trueG)|  \quad
 &\leq  \quad \sum_{i=1}^n\sum_{j=1}^n | \Delta_{ij} - \DeltaSup{B}_{ij}|  \quad
 \leq \epsilon~,
\end{align}

 which implies $E(S_A,\trueF,\trueG) + \epsilon \ge E(S,F,G)$ for all $S,F,G$.

Finally, note that the expected value of uninformed strategies is 0, because $E(S,F^0,G) = 0$ for any uninformed $F^0$, regardless of score matrix, while $\epsilon$ can always be set small enough ensuring that being truthful has positive expected payoff.
\end{proof}

\subsection{Agent heterogeneity}
\label{sec:extensions}

The $\genmech$ mechanism only uses the signs of the entries of $\Delta$ to
compute scores, not the exact values.
 This means that the results can
handle some variability across agent ``sensing technology,'' as long as
the sign structure of the $\Delta$ matrix is uniform across all
pairwise matchings of peers. In the binary signal case, this reduces
to agents having positive correlation between their signals, giving
exactly the heterogeneity results in \citeN{DasguptaGhosh13}. 
Moreover, the agents themselves do not need to know the detailed
signal model to know how to act; as long as they believe that the
scoring mechanism is using the correct correlation structure, they can
be confident in investing effort and simply report their signals
truthfully.

\subsection{Unintended Signals}
\label{subsec:signal-models}

Finally, we discuss a seemingly pervasive problem in peer prediction:
in practice, tasks may have many distinctive attributes on which agents may base their reports, in addition to the intended signal, and yet all models in the literature assume away the possibility that agents
can choose to acquire such unintended signals.
For example, in online peer assessment where students are
asked to evaluate the quality of student assignments, students
could instead base their assessments on the length of an essay
 or the average number of syllables per word. In an image
categorization system, users could base their reports on the color of
the top-left pixel, or the number of kittens present (!), rather than on
the features they are asked to evaluate. Alternative assessments can
benefit agents in two ways: they may require less effort, and they may
result in higher expected scores via more favorable Delta matrices.\footnote{This issue is related to the perennial problem of spurious correlations in classification and regression.}

We can characterize when this kind of manipulation cannot be
beneficial to agents in the $\genmech$ mechanism.  The idea is that
the amount of correlation coupled with variability across tasks should
be large enough for the intended signal.  Let $\eta$ represent a
particular {\em task evaluation strategy}, which may involve acquiring
different signals from the task than intended.  Let $\DeltaSup{\eta}$
be the corresponding $\Delta$ matrix that would be designed if this
was the signal distribution. This is defined on a domain of signals
that may be distinct from that in the designed mechanism.
In comparison, let $\eta^\ast$ define the task evaluation strategy
intended by the designer (i.e., acquiring signals consistent with the
mechanism's message space), coupled with truthful reporting. The
expected payment from this behavior is 
$\sum_{ij: \DeltaSup{\eta^\ast}_{ij}>0} \DeltaSup{\eta^\ast}_{ij}.$

The maximal expected score for an alternate task evaluation strategy
$\eta$ may require a strategy remapping signal pairs in the signal
space associated with $\eta$ to signal pairs in the intended mechanism
(e.g., if the signal space under $\eta$ is different
than that provided by the mechanism's message space). 
The expected payment is bounded above by
$\sum_{ij: \DeltaSup{\eta}_{ij}>0} \DeltaSup{\eta}_{ij}.$
Therefore, if the expected score for the intended $\eta^\ast$ is
higher than the maximum possible score for other $\eta$, there will be
no reason to deviate.

\section{Conclusion}

We study the design of peer prediction mechanisms that leverage signal
reports on multiple tasks to ensure informed truthfulness, where truthful reporting is the
joint strategy with highest payoff across all joint strategies, and
strictly higher payoff than all uninformed strategies (i.e., those
that do not depend on signals or require effort).
We introduce the $\genmech$ mechanism, which
is informed-truthful
in general multi-signal 
domains. The
mechanism reduces to the~\citeN{DasguptaGhosh13} mechanism
in binary domains, is
 strongly truthful in categorical domains,
and maximally strongly truthful among a broad class of
multi-task mechanisms.
We also present a detail-free version of the mechanism that works without knowledge of the signal distribution 
while retaining $\epsilon$-informed truthfulness.
Interesting directions for future work include: (i) adopting a non-binary
model of effort, and (ii) combining learning with models
of agent heterogeneity.

\end{document}